\let\@fnsymbol\@arabic
\title{Colored Non-Crossing Euclidean Steiner Forest}
\author{%
  Sergey~Bereg\thanks{University of Texas, Dallas, USA.} 
  \and 
  Krzysztof~Fleszar\thanks{Lehrstuhl f\"ur Informatik I,
    Universit\"at W\"urzburg, Germany.}
  \and
  Philipp~Kindermann\footnotemark[2]
  \and
  Sergey~Pupyrev\thanks{Department of Computer Science, 
    University of Arizona, Tucson, Arizona, USA.}
    $^{,}$\thanks{Institute of Mathematics
    and Computer Science, Ural Federal University, Russia.}
  \and
  Joachim~Spoerhase\footnotemark[2]
  \and
  Alexander~Wolff\footnotemark[2]}
\newcommand{\cest}[1]{$#1$-CESF\xspace}
\newcommand{\kcest}{\cest{k}}
\newcommand{\eps}{\ensuremath{\varepsilon}\xspace}
\DeclareMathOperator{\opt}{OPT}
\DeclareMathOperator{\alg}{ALG}
\DeclareMathOperator{\bg}{BG}
\theoremstyle{plain}
\newtheorem{theorem}{Theorem} 
\newtheorem{lemma}{Lemma}
\begin{document}

\maketitle

\begin{abstract}
  Given a set of $k$-colored points in the plane, we consider the
  problem of finding $k$ trees such that each tree connects all points
  of one color class, no two trees cross, and the total edge length of
  the trees is minimized.  For \mbox{$k=1$}, this is the well-known
  Euclidean Steiner tree problem.  For general~$k$, a
  $k\rho$-approximation algorithm is known, where $\rho \le 1.21$ is
  the Steiner ratio.

  We present a PTAS for $k=2$, a $(5/3+\eps)$-approximation algorithm
  for $k=3$, and two approximation algorithms for general~$k$, with
  ratios $O(\sqrt n \log k)$ and~$k+\eps$.
\end{abstract}

\section{Introduction}

\emph{Steiner tree} is a fundamental problem in combinatorial
optimization.  Given an edge-weighted graph and a set of vertices
called \emph{terminals}, the task is to find a minimum-weight subgraph
that connects the terminals.  For \emph{Steiner forest}, the terminals
are colored, and the desired subgraph must connect, for each color,
the terminals of that color.

In this paper, we consider Steiner forest in a geometric setting and
add to it the requirement of planarity.  More precisely,
we consider the problem of computing, for a $k$-colored
set of points in the plane (which we also call \emph{terminals}), $k$
pairwise non-crossing Euclidean Steiner trees, one 
for each color. Note that such trees exist for every given set of points.
We call the problem of minimizing the total length of
these trees \emph{$k$-Colored Non-Crossing Euclidean Steiner Forest}
(\kcest).
Figure~\ref{fig:bad} shows some interesting instances of \kcest.

\begin{figure}[t]
  \centering
  \begin{subfigure}[b]{.27\textwidth}
  \centering
    \includegraphics[page=1]{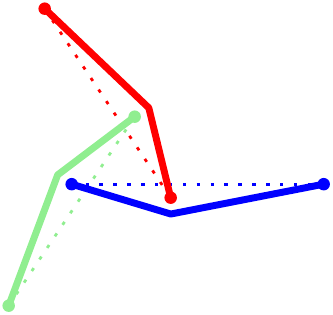}
    \caption{}
    \label{fig:bad-straight}
  \end{subfigure}
  \hfil
  \begin{subfigure}[b]{.27\textwidth}
  \centering
    \includegraphics[page=4]{bad_new}
    \caption{}
    \label{fig:bad-length}
  \end{subfigure}
  \hfil
  \begin{subfigure}[b]{.43\textwidth}
  \centering
    \includegraphics[page=3]{bad_new}
    \caption{}
    \label{fig:bad-maze}
  \end{subfigure}
  \caption{Difficult examples for \kcest. (a)~The optimum
    contains no straight-line edge. (b)~Segment $ab$ is used
    twice by the black curve.  (c)~In the optimum, the black curve can
    be made arbitrarily longer than the corresponding
    straight-line segment (if every gray segment represents a
    different color).}
  \label{fig:bad}
\end{figure}

The problem is
motivated by a method that Efrat et al.~\cite{EHKP14} suggested
recently for visualizing embedded and clustered graphs.  They
visualize clusters by regions in the plane that enclose related graph
vertices.  Their method attempts to reduce visual clutter and optimize
``convexity'' of the resulting regions by reducing the amount of
``ink'' necessary to connect all elements of a cluster.  Efrat et
al.~\cite{EHKP14} proposed the problem \kcest and provided a simple 
$k\rho$-approximation algorithm, where
$\rho$ is the \emph{Steiner ratio}, that is, the supremum, over all
finite point sets in the plane, of the ratio of the
total edge length of a minimum spanning tree over the total edge length
of a Euclidean Steiner tree (EST). 
Chung and Graham~\cite{CG85} showed that $\rho \le 1.21$.

\paragraph{Our contribution.}
The middle column of Table~\ref{tab:results} shows our results. 
For \cest{k}, we
present a deterministic $(k+\eps)$- 
and a randomized $O(\sqrt{n}\log k)$-appro\-xi\-ma\-tion algorithm;
see Section~\ref{sec:kcest}. The main result of our paper is that
\cest{2} admits a polynomial-time approximation scheme (PTAS); see Section~\ref{sec:2cest}. 
By a non-trivial modification of the PTAS, we prove that \cest{3} admits a
$(5/3+\eps)$-approximation algorithm; see Section~\ref{sec:3cest}.

Our PTAS for~\cest{2} uses some ideas of Arora's
algorithm~\cite{Aro98} for EST, which is equivalent to \cest{1}.  Since,
in a solution to~\cest{2}, the two trees are not allowed to cross, our
approach differs from Arora's algorithm in several respects.  We use a
different notion of \emph{$r$-lightness}, and by a
\emph{portal-crossing reduction} we achieve that each portal is
crossed at most three times.  More care is also needed in the
perturbation step and in the base case of the dynamic program.

\begin{table}[htb]
  \centering
  \begin{tabular}{@{}c@{\qquad}c@{\qquad}c@{}}
    \toprule
    $k$ & \kcest & planar graph \\ \midrule
    1 & EST: NP-hard \cite{GJ79}, & ST: NP-hard \cite{GJ79}, \\
     & $1+\eps$ \cite{Aro98,Mit99a} & $1+\eps$ \cite{BKM09} \\
    \hdashline
    2 & $1+\eps$ (Theorem~\ref{thm:2trees}) \\
    3 & $5/3+\eps$ (Theorem~\ref{thm:3cest}) \\
    \hdashline
    general $k$ & $k+\eps$ (Theorem~\ref{thm:kcest-k+eps}), &
    $k$ const.-size nets on 2 faces, exact~\cite{KMN01} \\
     & $O(\sqrt{n}\log k)$ (Theorem~\ref{thm:kcest-sqrtnlogk})\\
    \hdashline
    $n/2$ & NP-hard \cite{BF98}, &
    $k$ size-2 nets on $h$ faces, exact \cite{EN11} \\ 
     & $O(\sqrt{n}\log n)$ \cite{CHKL13} \\
    \bottomrule
  \end{tabular}
  \caption{Known and new results for \kcest (hardness and
    approximation ratios).}
  \label{tab:results}
\end{table}

\paragraph{Related Work.}

Apart from the result of Efrat et al.~\cite{EHKP14}, so far the only
two variants of \kcest that have been studied are those with extreme
values of~$k$.  As mentioned above, \cest{1} is the same as EST, which is 
NP-hard \cite{GJ79}.  Arora~\cite{Aro98} and Mitchell~\cite{Mit99a}
showed independently that EST admits a PTAS.  The other extreme value
of~$k$, for which
\kcest has been considered, is $k=n/2$.  This is the problem of
joining specified pairs of points via non-crossing curves of minimum
total length.  Liebling et al.~\cite{LMMPS95} gave some heuristics for
this problem.  Bastert and Fekete~\cite{BF98} claimed that \cest{(n/2)}
is NP-hard, but their proof has not been formally published.
Recently, Chan et al.~\cite{CHKL13} considered \cest{(n/2)} in the
context of embedding planar graphs at fixed vertex locations.  They
gave an $O(\sqrt n \log n)$-approximation algorithm based on an idea
of Liebling et al.~\cite{LMMPS95} for computing a short non-crossing
tour of all given points.

Several set visualization techniques assume the same setting
where the input is a multicolored point
set~\cite{ARRC11,HKKLS+13,CPC09,RBKMSW08}.
These techniques allow regions to cross,
while the regions that correspond to (the geodesic hulls of) the trees
in our approach are non-crossing.  Reinbacher et al.~\cite{RBKMSW08}
consider the problem of computing a minimum-perimeter red polygon
inside a given red polygon that contains red and blue points.

There is substantial work on the case where there are obstacles
in the plane.  Note that, in contrast to \kcest, a valid
solution may not exist in that setting.  For a single color (that is,
\cest{1} with obstacles), M{\"u}ller-Hannemann and Tazari~\cite{MT10}
give a PTAS.
The same variant is considered by Razaghpour~\cite{Raz08}.
Papadopoulou~\cite{Pap99} gave
an algorithm for finding minimum-length non-crossing paths joining
pairs of points (that is, \cest{n/2}) on the boundary of a single
polygon.
Erickson and Nayyeri~\cite{EN11} generalize the problem to
the case of points on the boundaries of $h$ polygonal obstacles; their
algorithm is exponential in~$h$.
A practical aspect of the
problem---computing non-crossing paths of specified thickness---was
studied by Polishchuk and Mitchell~\cite{PM07}.  Their algorithm
computes a representation of the thick paths inside a simple polygon;
they also show how to find shortest thick disjoint paths joining
endpoints on the boundaries of polygonal obstacles (with exponential
dependence on the number of obstacles).
They also prove that the problem is hard to approximate.
The main difficulty with multiple obstacles is deciding which homotopy
class of the paths gives the minimum length.  If the homotopy class of
the paths is specified, then the problem is significantly
easier~\cite{Bes03,EKL06,v-hcor-GD12}.

The graph version of the problem has been studied in the
context of VLSI design. Given an edge-weighted plane graph~$G$ and a
family of $k$ vertex sets (called nets), the goal is to find a set of
$k$ non-crossing Steiner trees interconnecting the nets such that the
total weight is minimized.  The problem is clearly NP-hard, as the
special case $k=1$ is the graph Steiner tree problem (ST), which is
known to be NP-hard~\cite{GJ79}.  ST admits a PTAS~\cite{BKM09}.
On planar graphs, \kcest can be solved in $O(2^{O(h^2)}n \log k)$
time~\cite{EN11} for $k$ terminal pairs (that is, size-2 nets) if all
terminals lie on~$h$ faces of the given $n$-vertex graph
and in $O(n \log n)$ time for $h=2$ and $k$ constant-size
nets~\cite{KMN01}.  We list these results in Table~\ref{tab:results};
many entries are still open.

A restricted version of the problem in which all nets consist of
exactly two terminals (that is, $k=n/2$) was considered by Takahashi
et al.~\cite{TSN96}. They show that, if $G$ satisfies the so-called
2-face condition, that is, if all terminals lie on at most two of the
face boundaries of $G$, then such a set of $k$ non-crossing paths can
be found in time $O(n \log n)$.  Subsequently, Kusakari et
al.~\cite{KMN01} extended this result by providing an optimal $O(n
\log n)$-time algorithm for the variant without the two-terminal
restriction (but still requiring the 2-face condition).  The general
case, where terminals lie on at most $h$ face boundaries, was studied
by Erickson and Nayyeri~\cite{EN11}.  They presented an algorithm that
connects the given $k$ terminal pairs by non-crossing paths of minimum
total length in $O(2^{O(h^2)}n \log k)$ time.  The complexity of the
problem on planar graphs for arbitrary $h>2$ seems to be open.

In the \emph{group Steiner tree} problem, one is given a $k$-colored point set
and the task is to find a minimum-length tree that connects at least
one point of each color.  The problem is discussed in a survey by
Mitchell~\cite{m-gspno-HBCG00}.
Another related problem is that of constructing a minimum-length
non-crossing path through a given sequence of points in the plane.
Its complexity status remains open~\cite{PB05,Lof11}

\section{Algorithms for \kcest}
\label{sec:kcest}

Despite its simple formulation, the \kcest problem seems to be rather
difficult.  There are instances where the optimum contains no
straight-line edges or contains paths with repeated
line segments; see Figures~\ref{fig:bad-straight}
and~\ref{fig:bad-length}.  This shows that obvious greedy
algorithms fail to find an optimal solution, as Liebling et
al.~\cite{LMMPS95} observed.  They also provided an instance of the
problem in a unit square for $k=n/2$ in which the length of an optimal solution is in
$\Omega(n \sqrt{n})$, whereas the trivial lower bound (the sum of
lengths of straight-line segments connecting the pairs of terminals) is
only $O(n)$. The example is based on the existence of expander graphs
with a quadratic number of edge crossings. In Figure~\ref{fig:bad-maze}, we
provide an example in which the length of one of the curves in the optimal
solution can be arbitrarily bigger than the trivial lower bound
for the corresponding color.

Efrat et al.~\cite{EHKP14} suggested an approximation algorithm for
\kcest.  The key ingredient of their algorithm is the following
observation, which shows how to make a pair of given trees
non-crossing: reroute one of the trees using a ``shell'' around the
other tree.  For any geometric graph $G$, we denote its total edge
length by~$|G|$.

\begin{lemma}[Efrat et al.~\cite{EHKP14}]
\label{lm:shell}
Let $R$ and $B$ be two trees in the plane spanning red and blue terminals, respectively.
Then, there exists a tree $R'$ spanning the red terminals such that
\begin{enumerate}[(i)]
	\item $R'$ and $B$ are non-crossing and
  \item $|R'| \le |R| + 2|B|$.
\end{enumerate}
\end{lemma}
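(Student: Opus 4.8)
The plan is to realize the intuition hinted at in the statement — reroute $R$ through a thin ``shell'' around $B$ — and to obtain the \emph{exact} bound $|R'|\le|R|+2|B|$ by passing to the $0$-width limit of the shell, that is, by \emph{cutting the plane open along $B$}. If $R$ and $B$ do not cross, we are already done with $R'=R$ (then $|R'|=|R|\le|R|+2|B|$), so assume $R\cap B\neq\emptyset$. Since $B$ is a tree, its complement $\mathbb{R}^2\setminus B$ is connected; cutting the plane open along $B$ yields a connected region $D$ together with a projection $\pi\colon D\to\mathbb{R}^2$ that restricts to a length-preserving homeomorphism of the interior $\operatorname{int} D$ onto $\mathbb{R}^2\setminus B$ and that maps the boundary $\partial D$ onto $B$, traversing every edge of $B$ exactly twice (once from each side). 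In particular $\partial D$ is a closed curve of length exactly $2|B|$; equivalently, $\partial D$ is an Euler tour of the multigraph obtained from $B$ by doubling every edge.

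Next I lift $R$ into $D$. The part $R\setminus B$ lies in $\mathbb{R}^2\setminus B$ and therefore lifts to a geometric graph $\widetilde R\subseteq\operatorname{int} D$ with $|\widetilde R|=|R\setminus B|\le|R|$. The lift $\widetilde R$ may be disconnected — its components are the lifts of the (finitely many) components of $R\setminus B$ — but, because $R$ is connected and meets $B$, each component of $R\setminus B$ has a point of $B$ in its closure, and lifting such a point shows that the closure of every component of $\widetilde R$ meets $\partial D$. Hence $\widetilde R\cup\partial D$ is connected. Let $T$ be a spanning tree of $\widetilde R\cup\partial D$ that contains every lift of every red terminal (the red terminals lying on $B$ lift to points of $\partial D$, all of which is available). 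Then
\[
  |T|\ \le\ |\widetilde R|+|\partial D|\ \le\ |R|+2|B|.
\]

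Finally I project back and clean up. Put $R''=\pi(T)$. It is connected (continuous image of a connected set), it spans all red terminals, and $|R''|\le|T|\le|R|+2|B|$, since $\pi$ is length-preserving on $\operatorname{int} D$ and on $\partial D$ can only identify points, never increasing length. Crucially $R''$ is non-crossing with $B$: the part $\pi(\widetilde R\cap T)\subseteq R\setminus B$ is disjoint from $B$ and only limits onto it, approaching each contact point from one side (it came from $\operatorname{int} D$), while $\pi(\partial D\cap T)$ is a union of sub-arcs of $B$; thus $R''$ never crosses $B$ transversally. Pruning $R''$ to an acyclic subgraph still spanning the red terminals yields the desired tree $R'$ with $|R'|\le|R''|\le|R|+2|B|$.

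The length accounting and the connectivity of $\widetilde R\cup\partial D$ are routine. The step I expect to need the most care is making ``cut the plane open along $B$'' and the non-crossing conclusion rigorous at the \emph{contact structure} between $R''$ and $B$: one must check that $\pi$ behaves as claimed at the vertices of $B$ (branch points and leaves), fix the convention that ``non-crossing'' permits $R'$ to run along $B$, and verify that the finitely many places where arcs of $R\setminus B$ abut $B$ can be left touching (or infinitesimally perturbed) without creating a transversal crossing and without affecting the bound. A more hands-on alternative that sidesteps the cut-surface formalism is to thicken $B$ to an honest simple polygon $P_\delta$ of perimeter at most $2|B|+O(\delta)$, reroute the portion of $R$ inside $P_\delta$ out along $\partial P_\delta$, and let $\delta\to0$; with that route the obstacle shifts to arguing that the limit attains $2|B|$ rather than only $2|B|+\varepsilon$.
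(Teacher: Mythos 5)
The paper does not prove this lemma itself---it cites it from Efrat et al.\ and only sketches the idea of rerouting one tree along a ``shell'' around the other---and your argument is exactly that shell idea made rigorous: cutting the plane open along $B$ produces the boundary walk of length $2|B|$, which is the zero-width shell. Your proof is correct (the length accounting, connectivity of $\widetilde R\cup\partial D$, and the convention that ``non-crossing'' tolerates arcs running along or touching $B$, which the paper itself implicitly uses, e.g.\ in Figure~\ref{fig:bad-length}), so it matches the intended approach.
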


Efrat et al.~\cite{EHKP14} start with $k$ (possibly intersecting)
minimum spanning trees, one for each color.  Then, they iteratively go
through these trees in order of increasing length.  In every step,
they reroute the next tree by laying a shell around the current
solution as in Lemma~\ref{lm:shell}.  Their algorithm has
approximation factor~$k\rho$.  We now show that the algorithm even
yields approximation factor~$k+\eps$ if we use a PTAS for EST for the
initial solution to each color.

\begin{theorem}
  \label{thm:kcest-k+eps}
  For every $\eps>0$, there is a $(k+\eps)$-approximation algorithm for
  \cest{k}.
\end{theorem}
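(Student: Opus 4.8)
The plan is to run the algorithm of Efrat et al.\ but to feed it better starting trees and then redo its length bookkeeping. Fix $\eps>0$ and put $\eps':=\eps/k$. Let $\opt_i$ be the length of a shortest Euclidean Steiner tree for the color-$i$ terminals. First, for each color $i$, use the PTAS for \cest{1} to compute in polynomial time a tree $S_i$ spanning the color-$i$ terminals with $|S_i|\le(1+\eps')\opt_i$. Relabel the colors so that $|S_1|\le\dots\le|S_k|$, and place the trees in this order: set $R_1:=S_1$, and for $i=2,\dots,k$ use Lemma~\ref{lm:shell} to reroute $S_i$ around (the boundary of) an arbitrarily thin tubular neighbourhood of $S_1\cup\dots\cup S_{i-1}$, obtaining a tree $R_i$ that spans the color-$i$ terminals, crosses none of $R_1,\dots,R_{i-1}$, and satisfies
\begin{equation}\label{eq:noncompound}
  |R_i|\ \le\ |S_i|+2\sum_{j<i}|S_j|.
\end{equation}
Output $\{R_1,\dots,R_k\}$; by construction this is a feasible solution to \kcest.

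The step I expect to require the most care is~\eqref{eq:noncompound}, specifically the fact that its right-hand side involves the \emph{original} lengths $|S_j|$ and not the rerouted lengths $|R_j|$. Rerouting $S_i$ naively around $R_1\cup\dots\cup R_{i-1}$ and using only $|R_i|\le|S_i|+2\sum_{j<i}|R_j|$ would make the bound compound and blow up exponentially in $k$. To avoid this one lays the shell around the thin region currently \emph{occupied} by the partial solution: each $R_j$ can be kept inside an arbitrarily thin neighbourhood of $S_1\cup\dots\cup S_j\subseteq S_1\cup\dots\cup S_{i-1}$, so if $S_i$ is rerouted to avoid a slightly wider neighbourhood of $S_1\cup\dots\cup S_{i-1}$ then it avoids all of $R_1,\dots,R_{i-1}$ simultaneously, and the boundary of that neighbourhood has length tending to $2|S_1\cup\dots\cup S_{i-1}|=2\sum_{j<i}|S_j|$ as the width tends to $0$; applying Lemma~\ref{lm:shell} therefore adds at most that much length. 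Picking a strictly decreasing sequence of neighbourhood widths over the $k$ iterations (and absorbing the resulting lower-order slack into $\eps$) makes this precise. This non-compounding is exactly the property behind the $k\rho$ bound of Efrat et al.; here we only substitute $|S_i|\le(1+\eps')\opt_i$ for the estimate $\mathrm{MST}_i\le\rho\,\opt_i$.

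Granting~\eqref{eq:noncompound}, the calculation is immediate. Let $\alg=\sum_{i=1}^k|R_i|$. Summing~\eqref{eq:noncompound} over $i$ and rearranging,
\begin{equation*}
  \alg\ \le\ \sum_{i=1}^k|S_i|+2\sum_{i=1}^k\sum_{j<i}|S_j|\ =\ \sum_{j=1}^k(2k-2j+1)\,|S_j|.
\end{equation*}
The coefficients $c_j:=k-2j+1$ are non-increasing in $j$ and satisfy $\sum_{j=1}^k c_j=0$; since also $|S_1|\le\dots\le|S_k|$, splitting the sum at the sign change of $c_j$ gives $\sum_j c_j|S_j|\le0$, that is, $\sum_j(2k-2j+1)|S_j|\le k\sum_j|S_j|$ (a special case of Chebyshev's sum inequality). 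Hence $\alg\le k\sum_j|S_j|\le k(1+\eps')\sum_j\opt_j$. Finally, in any optimal solution the tree for color $j$ connects the color-$j$ terminals and therefore has length at least $\opt_j$, so $\sum_j\opt_j\le\opt$. Combining, $\alg\le k(1+\eps')\opt=(k+\eps)\,\opt$, which proves the claim.
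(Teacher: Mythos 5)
Your proof is correct and follows essentially the same route as the paper: per-color PTAS trees, sorting by length, iterated shell rerouting with the non-compounding bound $|R_i|\le|S_i|+2\sum_{j<i}|S_j|$, and the ordering (Chebyshev-type) argument giving the factor $k(1+\eps')$. The only difference is presentational: you spell out explicitly why the shell lengths are charged to the original trees $S_j$ rather than the rerouted ones (via thin nested neighbourhoods), a point the paper's proof treats implicitly, and you compare to the per-color Steiner minima $\opt_j\le T_j$ instead of directly to the optimal solution's trees.
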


\begin{proof}
  Fix an optimal solution~$\mathcal{T}$.  For $i=1,\dots,k$, let $T_i$
  be the length of the Steiner tree spanning color~$i$
  in~$\mathcal{T}$.  Hence, $\opt = \sum_{i=1}^k T_i$.  For each
  color~$i$, use a PTAS to compute a Steiner tree of length~$P_i\le
  (1+\eps)T_i$. Now, consider the trees one by one in non-decreasing
  order of their lengths. W.l.o.g., assume that~$P_1\le\ldots\le P_k$.

  Denote by~$S_i$ the length of the Steiner tree spanning color~$i$ in
  our solution~$\cal S$.  Then Lemma~\ref{lm:shell} yields $S_1=P_1$
  and $S_2 \le P_2+2P_1$. For the $i$-th tree, we add a shell around
  every tree~$j<i$. Observe that the tree and the shell give us a
  Steiner tree for~$i$ that does not intersect any tree built in
  step~$j<i$. Thus, we get~$S_i\le{P_i + \sum_{j=1}^{i-1} 2P_j}$
  and $|{\cal S}| = \sum_{i=1}^k \big(P_i + \sum_{j=1}^{i-1} 2P_j \big)$.
  Let $\overline{P}=(\sum_{i=1}^k P_i)/k$ be the average length of the Steiner
  trees. Since the Steiner trees~$P_i$ are ordered by non-decreasing
  lengths, it holds for all~$i$ that $\sum_{j=1}^{i} P_j \le
  i\cdot\overline{P}$.  This yields
  \[S_i=P_i+\sum_{j=1}^{i-1} 2P_j=\sum_{j=1}^{i-1} P_j+\sum_{j=1}^{i}
  P_j\le (2i-1)\overline{P}~,\] which sums up to
  \[|{\cal S}| \le \sum_{i=1}^k (2i-1)\overline{P} =
  k^2\cdot\overline{P}~.\]
	
  Since $k\cdot\overline{P} = \sum_i P_i\le (1+\eps) \opt$, we have
  $\overline{P}\le (1+\eps)\opt/k$.  Hence, the total length of our
  solution is 
  \[|{\cal S}| \le k^2\cdot\overline{P} \le (1+\eps) k \opt~.\qedhere\]
\end{proof}

For even~$k$, we can slightly improve on this by using our PTAS for \cest{2}
(Theorem~\ref{thm:2trees}).

\begin{theorem}
  \label{thm:kcest-k+eps-improved}
  For every $\eps>0$, there is a $(k-1+\eps)$-approximation algorithm
  for \cest{k} if $k$ is even.
\end{theorem}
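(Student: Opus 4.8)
The plan is to follow the proof of Theorem~\ref{thm:kcest-k+eps}, but to process the colors \emph{in pairs}, using the PTAS for \cest{2} inside each pair. Fix an optimal solution of value $\opt=\sum_{i=1}^{k}T_i$, where $T_i$ is the length of its tree for color~$i$. Since $k$ is even, partition the $k$ colors into $m:=k/2$ \emph{bundles}, bundle~$\ell$ consisting of colors $2\ell-1$ and~$2\ell$ (any pairing works). In the optimal solution the trees of colors $2\ell-1$ and~$2\ell$ do not cross, so they form a feasible solution of the \cest{2} instance given by these two color classes; hence, for a parameter $\eps'$ to be chosen below, the PTAS of Theorem~\ref{thm:2trees} computes a non-crossing pair $(X_\ell,Y_\ell)$ of Steiner trees for bundle~$\ell$ with
\[
  Q_\ell\ :=\ |X_\ell|+|Y_\ell|\ \le\ (1+\eps')\,(T_{2\ell-1}+T_{2\ell})\,,
\]
and in particular $\sum_{\ell=1}^{m}Q_\ell\le(1+\eps')\opt$.

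Next I would sort the bundles so that $Q_1\le\dots\le Q_m$ and assemble the solution~${\cal S}$ bundle by bundle, maintaining the invariant that, once bundle~$\ell$ has been inserted, all trees of bundles $1,\dots,\ell$ are pairwise non-crossing. To insert bundle~$\ell$ I reroute \emph{both} of its trees around the earlier bundles, by the shelling technique from the proof of Theorem~\ref{thm:kcest-k+eps}: taking the \emph{original} trees $X_j,Y_j$ with $j<\ell$ as obstacles and letting the shell scale decrease from bundle to bundle, the rerouted trees $X'_\ell,Y'_\ell$ avoid every final tree of the earlier bundles, and each of $X'_\ell,Y'_\ell$ grows by at most $2\sum_{j<\ell}Q_j$, up to an additive error that shrinking all shell scales makes negligible. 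The step that goes beyond Theorem~\ref{thm:kcest-k+eps} is that $X'_\ell$ and $Y'_\ell$ must also be non-crossing \emph{with one another}. I would achieve this by running the two shells at two scales that are infinitesimally close: route $X_\ell$ along the inner offset curves of the obstacle set and $Y_\ell$ along the outer ones, interleaving the arcs of the two shells in the cyclic order prescribed by the planar drawing of $X_\ell\cup Y_\ell$ outside the obstacle neighborhood, just as the shell of Lemma~\ref{lm:shell} is threaded through a single tree. Folding the negligible terms into~$\eps'$, this gives
\[
  |X'_\ell|+|Y'_\ell|\ \le\ Q_\ell+4\sum_{j<\ell}Q_j\,.
\]

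It then only remains to sum up. Writing $\overline{Q}:=\tfrac1m\sum_{\ell=1}^{m}Q_\ell$ and using that the $Q_\ell$ are sorted, so $\sum_{j<\ell}Q_j\le(\ell-1)\overline{Q}$ for every~$\ell$, we obtain
\[
  |{\cal S}|\ =\ \sum_{\ell=1}^{m}\bigl(|X'_\ell|+|Y'_\ell|\bigr)
  \ \le\ \sum_{\ell=1}^{m}\bigl(Q_\ell+4(\ell-1)\overline{Q}\bigr)
  \ =\ m\overline{Q}+2m(m-1)\overline{Q}\ =\ (2m-1)\,m\overline{Q}\ =\ (k-1)\,m\overline{Q}\,.
\]
Since $m\overline{Q}=\sum_{\ell=1}^{m}Q_\ell\le(1+\eps')\opt$, picking $\eps'$ small enough as a function of $\eps$ and $k$ yields $|{\cal S}|\le(k-1)(1+\eps')\opt\le(k-1+\eps)\opt$, which proves the claim (for $k=2$ this degenerates to running the PTAS of Theorem~\ref{thm:2trees} on a single bundle).

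The genuinely delicate point, I expect, is the simultaneous rerouting of the two trees of a bundle. Rerouting them one after the other fails: the second tree would then have to detour around the first one in its already inflated form, and a short calculation shows that this only recovers an approximation factor of order~$k$. So the two shells really must be built together, and the work lies in the planarity bookkeeping---verifying that the two nested families of offset arcs can be interleaved so that no new crossing is introduced, neither between $X'_\ell$ and $Y'_\ell$ nor with the earlier bundles. Everything else (the pairing, the choice of~$\eps'$, and the summation) is routine.
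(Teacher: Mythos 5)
Your overall architecture differs from the paper's, and the step you yourself flag as ``delicate'' is a genuine gap, not routine bookkeeping. The paper never pairs the colors into $k/2$ bundles at all: it merges colors $1,\dots,k/2$ into one super-color and colors $k/2+1,\dots,k$ into another, runs the PTAS for \cest{2} \emph{once} on this bi-colored instance to get non-crossing trees $S_1,S_2$, and then extracts the individual color trees \emph{inside} each super-tree: $R_1$ is the minimal subtree of $S_1$ spanning color $1$ (so $|R_1|\le|S_1|$), and each further color $i\le k/2$ is realized as a concentric shell around the outer boundary of the previous one, so $|R_i|\le 2|S_1|$; these shells live in an infinitesimal neighborhood of $S_1$ and hence automatically avoid each other and everything on the $S_2$ side. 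Summing gives $(k-1)(|S_1|+|S_2|)\le(k-1)(1+\eps')\opt$ with no new rerouting lemma needed. Your scheme instead needs a \emph{two-tree} strengthening of Lemma~\ref{lm:shell}: both trees of a bundle must be rerouted around the union of the earlier bundles, each paying only about twice the obstacle length, while staying non-crossing \emph{with each other}. That lemma is exactly the crux, and you do not prove it.

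Moreover, the concrete recipe you sketch --- a fixed assignment ``$X_\ell$ on the inner offset, $Y_\ell$ on the outer offset'' --- fails. Take a single obstacle tree $O$ equal to a long horizontal segment, let $X_\ell$ be a U-shaped tree with terminals above both ends of the U and a terminal on its bottom, crossing $O$ twice, and let $Y_\ell$ be a short vertical edge with one terminal just above and one just below $O$, crossing $O$ once between $X_\ell$'s two crossings; $X_\ell$ and $Y_\ell$ are non-crossing. After $X_\ell$ attaches to the inner shell, its original edges (the vertical arms of the U and the stubs to its upper terminals) run from the obstacle outward and therefore cross any outer offset curve; $Y_\ell$'s outer shell is blocked at every such attachment and must detour around non-infinitesimal pieces of $X_\ell$ (over a terminal, around the whole U), a cost charged to nothing in your accounting. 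In this instance a valid rerouting exists, but it requires $Y_\ell$ to slip \emph{inside} $X_\ell$'s shell near one end of $O$, i.e.\ the inner/outer roles must vary along the obstacle, and it is not clear how to do this in general within the budget $|X'_\ell|+|Y'_\ell|\le Q_\ell+4\sum_{j<\ell}Q_j$ that your summation needs (a larger constant already loses the bound $k-1$). So either supply and prove such a two-tree shell lemma, or switch to the paper's two-super-color decomposition, which sidesteps the issue entirely; the rest of your calculation (sorting, averaging, choice of $\eps'$) is fine.
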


\begin{proof}
  Fix an optimal solution~$\mathcal T$.  For
  $1\le i\le k$, let~$P_i$ be the set of terminals of color~$i$.  Call
  the terminals in $Q_1=\bigcup_{i=1}^{k/2}P_i$ red and those in
  $Q_2=\bigcup_{j=1+k/2}^{k}P_j$ blue.  Let~$\mathcal T^*$ be an
  optimal solution to the resulting \cest{2} instance~$I^*$. Obviously,
  $|\mathcal T^*|\le |\mathcal T|$.

  We use the PTAS for \cest{2} of Theorem~\ref{thm:2trees} to compute a
  solution~$\mathcal S^*$ to~$I^*$ with $|\mathcal
  S^*|\le(1+\eps')|\mathcal T^*|$, where $\eps'=\eps/(k-1)$.
  In general, $\mathcal S^*$ is not a valid solution to~$I$.
  Let~$S_1$ and~$S_2$ be the trees connecting~$Q_1$ and~$Q_2$ in~$\mathcal S^*$,
  respectively. We now create, for $1\le i\le k/2$, a tree~$R_i$
  connecting the terminals in~$P_i$, as follows.
  Let~$R_1$ be the smallest subtree of~$S_1$ spanning the terminals in~$P_1$.
  Thus, $|R_1|\le|S_1|$.  Then,
  we create~$R_2$ by laying a shell around~$R_1$ that goes through all
  terminals in~$P_2$. Note that, at this point, $R_2$ still contains a cycle
  that has~$R_1$ in its interior. We iteratively create~$R_i$,
  $3\le i \le k/2$, by laying a shell around the outer boundary of~$R_{i-1}$
  that goes through all terminals in~$P_i$. Finally, we cut the cycles
  of~$R_2,\dots,R_{k/2}$ at some point to create trees. Since each of
  these
  trees consists of a shell around~$S_1$, it holds that $|R_i|\le 2|S_1|$,
  $2\le i\le k/2$.  Analogously, we compute $R_{1+k/2},\dots,R_k$ with
  $|R_{1+k/2}|\le |S_2|$ and $|R_j|\le 2|S_2|$ for $2+k/2\le j\le k$.

  Our solution~$\mathcal{R}$ to~$I$ consists of~$R_1,\dots,R_k$; its
  total length is 
  \begin{align*}
    |\mathcal{R}|
    &  =   |R_1|+\sum_{i=2}^{k/2}|R_i|+|R_{1+k/2}|+\sum_{j=2+k/2}^{k}|R_j|\\
    & \le  |S_1|+(k/2-1)\cdot 2|S_1|+|S_2|+(k/2-1)\cdot 2|S_2|\\
    &  =   (k-1)|S_1| + (k-1)|S_2| = (k-1)|\mathcal S^*|\\
    & \le  (k-1)(1+\eps')|\mathcal T^*|
    \le (k-1+(k-1)\eps')|\mathcal T|\\
    & = (k-1+\eps)|\mathcal T|~.
    \qedhere
  \end{align*}
\end{proof}  

Next, we present an approximation algorithm for \cest{k} whose ratio
depends only logarithmically on~$k$, but also depends on~$\sqrt{n}$.
The algorithm employs a space-filling curve through a set of given
points.  The curve was utilized in a heuristic for \cest{(n/2)} by
Liebling et al.~\cite{LMMPS95}.  Recently, Chan et al.~\cite{CHKL13}
showed that the approach yields an $O(\sqrt n \log n)$-approximation for
\cest{(n/2)}.  We show that similar arguments yield approximation
ratio $O(\sqrt n \log k)$ for general~$k$.

\begin{theorem}
  \label{thm:kcest-sqrtnlogk}
  \cest{k} admits a (randomized) $O(\sqrt n \log k)$-approximation
  algorithm.
\end{theorem}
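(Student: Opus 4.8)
The plan is to adapt the space-filling-curve approach of Chan et al.~\cite{CHKL13} for \cest{(n/2)} so that the logarithmic overhead shrinks from $\log n$ to $\log k$. Two ingredients are needed. First, a classical consequence of space-filling curves: any set of~$m$ points contained in a region of diameter~$\Delta$ admits a \emph{non-crossing} spanning path---equivalently, a simple closed curve through all of them---of length $O(\Delta\sqrt m)$, computable in polynomial time (take the space-filling-curve order and uncross it by $2$-opt moves, which never increases the length). Together with a thin tubular neighborhood, such a curve serves as a \emph{backbone} along which several trees can be nested without crossing. Second, the shell construction of Lemma~\ref{lm:shell}, which reroutes one tree around an already drawn tree at the price of twice its length, and which we will use in the mildly generalized form ``reroute one tree around an already drawn forest.''

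I would first discard every color that consists of a single terminal, since it needs no tree; afterwards every color has at least two terminals, so $k\le n/2$, and each color~$i$ has a well-defined diameter $D_i>0$ with $\opt\ge\sum_i D_i$ (each tree spans two points at distance~$D_i$, hence has length at least~$D_i$). The algorithm then builds the forest bottom-up over a (randomly shifted) hierarchical subdivision of the bounding box---a shifted quadtree, or the recursive partition induced by a space-filling curve. In a cell of diameter~$\Delta$ we only \emph{activate} colors whose diameter is $O(\Delta)$; thus a color of diameter~$D_i$ is handled only in cells of diameter $O(D_i)$ and below, which is exactly what prevents a large global backbone from ever being used (in particular it copes with instances that split into far-separated small clusters, where $\opt$ can be arbitrarily smaller than the overall diameter). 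To merge the two non-crossing forests computed for the children of a cell~$C$, I lay a short backbone through those terminals of~$C$ that still need to be connected across the cut, route for each straddling color one connector that hugs this backbone, and nest the connectors so that they stay mutually non-crossing and non-crossing with both child forests, using the forest version of Lemma~\ref{lm:shell} to certify planarity against the children. Since the child forests were themselves built confined to thin tubes around backbones, their cut-side terminals are reachable along the backbone, so each connector in~$C$ has length $O(\Delta\sqrt{m_C})$, where $m_C$ is the number of activated terminals in~$C$.

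For the analysis I would split the cost into a routing part (tree pieces inside cells) and a connector part. For the routing part, summing the per-cell bound $O(\Delta_C\sqrt{m_C})$ over all cells of one level gives $O(\Delta\sqrt n)$ for the current cell diameter~$\Delta$, by concavity of $\sqrt{\cdot}$; and since color~$i$ becomes active only at level $\Theta(\log(D/D_i))$, its routing cost telescopes to $O(\sqrt n\,D_i)$, summing to $O(\sqrt n)\cdot\sum_i D_i\le O(\sqrt n)\,\opt$. For the connector part, a color of diameter~$D_i$ straddles a cut only at $O(\log k)$ charged levels in expectation---this is the step that produces the $\log k$ factor, and it is where the random shift and the removal of singletons are used---and at each such level it pays $O(\sqrt n\,D_i)$, so the connectors cost $O(\sqrt n\log k)\cdot\sum_i D_i\le O(\sqrt n\log k)\,\opt$. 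Adding the two parts gives the claimed ratio; the random shift is what makes the algorithm randomized, and compressing the subdivision (merging cells through which no split happens, and stopping once a cell holds only one activated color) keeps the number of processed cells, hence the running time, polynomial even when the spread is unbounded.

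I expect the main obstacle to be precisely the connector accounting, which requires two things simultaneously: (a)~that each color is charged a connector on only $O(\log k)$ levels, which is delicate because a color of tiny diameter can still straddle a cut that happens to slice through its terminals---this is why the charge must be against~$D_i$ rather than against the ambient cell size, and why the random shift together with the singleton removal is essential so that small-diameter colors are isolated quickly; and (b)~that the connectors of a single cell can really be drawn non-crossing with the recursively built child forests without inflating their length, which is where confining every tree to a thin backbone tube, plus the forest generalization of Lemma~\ref{lm:shell}, does the work. The ``maze'' phenomenon of Figure~\ref{fig:bad-maze}---a color's curve being forced arbitrarily longer than its straight-line span by the other colors' trees---is exactly what this tube discipline is meant to tame, and verifying that it does so uniformly over all levels is the technical heart of the argument.
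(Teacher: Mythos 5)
There is a genuine gap, and it sits exactly where the $\log k$ is supposed to come from. Your connector accounting rests on the claim that each color is charged a connector at only $O(\log k)$ levels in expectation, but nothing in the construction supports this, and it is false in general: a single color whose terminals are spread over many distance scales (say, a nested sequence of clusters at geometrically decreasing separations) straddles cuts of the hierarchical subdivision at $\Omega(\log(\text{spread}))$ levels, no matter how small $k$ is. The random shift cannot prevent a cut from separating terminals that genuinely lie far apart at that scale, and removing singleton colors is irrelevant to this phenomenon; so your charge of $O(\sqrt n\,D_i)$ per straddled level sums to $O(\sqrt n \log(\text{spread}))\cdot\opt$ at best, which is the $O(\sqrt n \log n)$-type bound of \cite{CHKL13}, not the claimed improvement. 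You flag this step yourself as ``the main obstacle,'' but the proof of the theorem \emph{is} this step. There are secondary problems as well: the per-level routing bound ``summing $O(\Delta_C\sqrt{m_C})$ over all cells of one level gives $O(\Delta\sqrt n)$ by concavity'' is wrong when a level has many cells (concavity only gives $O(\Delta\sqrt{n\cdot\#\text{cells}})$), and invoking a ``forest version'' of Lemma~\ref{lm:shell} to make connectors non-crossing with the child forests would, if used literally, cost twice the length of those forests, which is not absorbed by your $O(\Delta\sqrt{m_C})$ budget.

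The paper obtains the $\log k$ by a much simpler mechanism that avoids hierarchical charging altogether. It builds one randomly shifted low-stretch curve $C$ of Chan et al.~\cite{CHKL13} through \emph{all} $n$ terminals and, for each color, takes the path visiting that color's terminals in curve order; the paths are nested around $C$, so non-crossing is immediate and no shell lemma is needed. The decisive observation is a telescoping one: the length of color $i$'s path is $\sum_j d_C(p^i_j,p^i_{j+1}) = d_C(p^i_1,p^i_{n_i})$, i.e.\ each color contributes only a \emph{single} pair distance along the curve. The stretch bound $\mathbb{E}[d_C(p,q)]\le O(\sqrt n \log(1/d(p,q)))\,d(p,q)$ is therefore applied only $k$ times, once per color, and an averaging/concavity argument over these $k$ pairs, together with $\opt\ge\sum_i d(p^i_1,p^i_{n_i})$, turns the logarithm into $\log k$. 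If you want to salvage your hierarchical route, you would need an analogous ``one charge per color'' device; as written, the proposal does not provide one.
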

\begin{proofWithWrapfigMath}
  Chan et al.~\cite{CHKL13} gave a randomized algorithm to construct a
  curve~$C$ through
  the given set~$P$ of $n$ points.  Their curve has
  small \emph{stretch}, that is, the ratio between the Euclidean
  distance $d(p,q)$ of two points $p,q\in P$ and their distance
  $d_C(p,q)$ along the curve is small.  Assuming that the points are
  scaled to lie in a unit square, Chan
  et al.\ showed, for a fixed pair of points $p,q\in P$,
  \[\mathbb{E}[d_C(p,q)] \le O(\sqrt{n} \log(\frac{1}{d(p,q)})) \cdot d(p,q)~.\]
  Using~$C$, we construct a solution to \cest{k} so that, for every
  color, the terminals are visited in the order given by the curve;
  and thus, the solution to every color is a path. All paths can be
  wrapped around the curve without intersecting each other; see
  Figure~\ref{fig:wrap}.
  
  \begin{figure}[tb]
    \centering
    \begin{subfigure}[b]{.47\textwidth}
      \centering
      \includegraphics[scale=1.5,page=2]{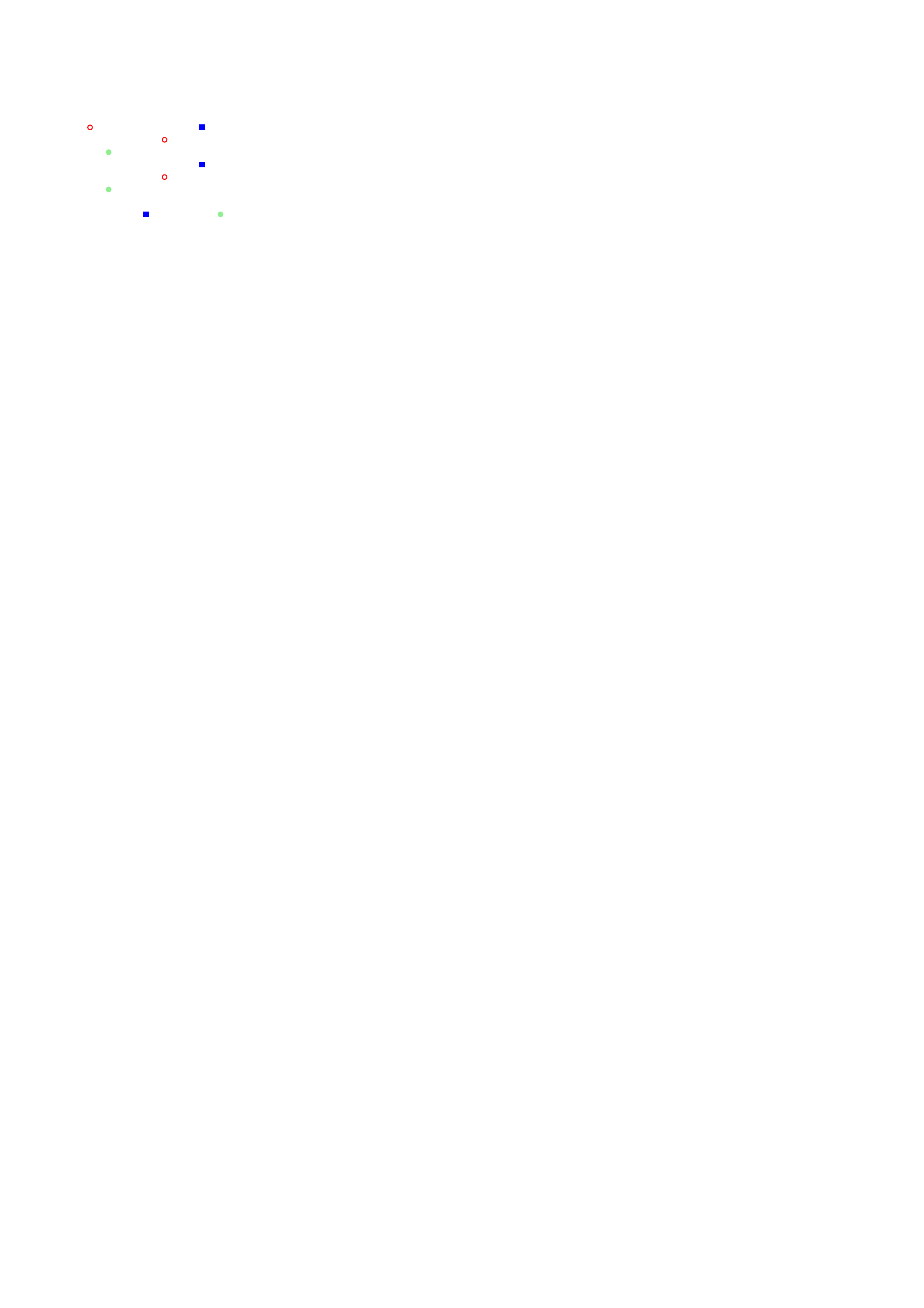}
      \caption{}
    \end{subfigure}
    \hfil
    \begin{subfigure}[b]{.47\textwidth}
      \centering
      \includegraphics[scale=1.5,page=3]{wrap}
      \caption{}
    \end{subfigure}
  \caption{(a)~A low-stretch curve $C$ through the terminals;
    (b)~a \cest{3} solution to the instance created by wrapping paths
    around $C$.}
    \label{fig:wrap}
  \end{figure}

If the order of the points along the curve for a specific color~$i$ is
$p_1^i, \dots, p_{n_i}^i$,
then the length of the corresponding path is
\[\sum_{j=1}^{n_i-1} d_C(p_j^i, p_{j+1}^i) = d_C(p_1^i, p_{n_i}^i)~.\]
Let $\overline{d}=\sum_{i=1}^k d(p_1^i, p_{n_i}^i) / k$.
The total (expected) length of the solution is
\begin{flalign*}
\alg = \sum_{i=1}^k \mathbb{E}[d_C(p_1^i, p_{n_i}^i)] &\le
       \sum_{i=1}^k O(\sqrt{n} \log(1/d(p_1^i, p_{n_i}^i))) \cdot d(p_1^i, p_{n_i}^i)\\
      &\le \sum_{i=1}^k O(\sqrt{n} \log(1/\overline{d})) \cdot \overline{d}~.
\end{flalign*}
Since the optimal solution to~$P$ connects all pairs of terminals of the same color (possibly using
non-straight-line curves), $\opt \ge \sum_{i=1}^k d(p_1^i, p_{n_i}^i)=k\overline{d}$.
Hence,
\[\alg \le \sum_{i=1}^k O(\sqrt{n} \log(k/\opt)) \cdot \opt/k \le O(\sqrt{n} \log k) \opt~.\tag*{\qed}\]
\end{proofWithWrapfigMath}

\section{PTAS for \cest{2}}\label{sec:2cest}

In this section, we show that \cest{2} admits a PTAS.
We follow the Arora's approach for computing EST~\cite{Aro98}, which
consists of the following steps. First, Arora performs a recursive 
geometric partitioning of the plane using a quadtree and snaps the input 
points to the corners of the tree. Next, he defines an \emph{$r$-light} solution,
which is allowed to cross an edge of a square in the quadtree at most~$r$ 
times and only at so-called \emph{portals}. Then he builds an optimal \emph{portal-respecting}
solution using dynamic programming, and finally trims the 
edges of the solution to get the result. To get an algorithm for \cest{2}, 
we modify these steps as follows:
\begin{compactenum}[(i)]
\item The perturbation step, which snaps the terminal to a grid, is modified
to avoid crossings between trees. Similarly, the reverse step transforming a perturbed
instance to an original one is  different; see Lemmas~\ref{lem:perturbed}
and \ref{lem:unperturbed}.

\item We use a different notion of an $r$-light solution in which every \emph{portal} 
is crossed at most~$r$ times. We devise a
\emph{portal-crossing reduction} that reduces the number of crossings
to $r=3$; see Lemma~\ref{lem:3light}.

\item The base case of dynamic programming needs a special modification;
it computes a set of crossing-free Steiner trees of minimum total 
length (see Lemma~\ref{lem:dp-ptas}).
\end{compactenum}

We assume that the bounding
rectangles of the two sets of input terminals overlap; otherwise, we can use
a PTAS for the Steiner tree of each input set individually.
We first snap the instance to an $(L\times L)$-grid with~$L = O(n)$.
We proceed as follows.  Let~$L_0$ be the diameter of the smallest
bounding box of the given \cest{2} instance.
We place an $(L\times L)$-grid of granularity $g=L_0/L$ inside the bounding box.
By scaling the instance appropriately, we can assume that the granularity
is $g=1$. We move each terminal of one color to the nearest grid point in an
even row and column, and each terminal of the other color to the nearest 
grid points in an odd row and column. Thus, the grid point for each terminal
 is uniquely defined,
and no terminals of different color end up at the same location. If there are
more terminals of the same color on a grid point, we remove all but one of
them. We call the resulting instance a \emph{perturbed} instance.

  \begin{lemma}
    \label{lem:perturbed}
    Let $\opt_I$ be the length of an optimal solution to a \cest{2} instance~$
I$
    of~$n$ terminals and let $\eps>0$. 
    There is an $(L\times L)$-grid with $L = O(n/\eps)$
    such that $\opt_{I^*}\le(1+\eps)\opt_I$, where $\opt_{I^*}$ is the length
    of an optimal solution for the perturbed instance~$I^*$.
  \end{lemma}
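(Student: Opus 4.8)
The plan is to bound the increase in length incurred by snapping each terminal to its designated grid point (even parity for one color, odd parity for the other). First I would fix an optimal solution $\mathcal T$ to the original instance~$I$, consisting of two non-crossing trees $R$ and $B$. For each terminal~$t$ of~$I$, let~$t^*$ be the grid point to which it is snapped; since the grid granularity is~$1$ and the nearest point of a given parity in both coordinates is within distance~$\sqrt 2$ (actually within a $2\times 2$ cell), we have $d(t,t^*) \le \sqrt 2$ (or a similarly small constant). The natural move is: for each terminal~$t$, append the straight segment $t t^*$ to the tree of~$t$'s color. This produces two connected geometric graphs $R'$ and $B'$ spanning the snapped terminals, with $|R'| + |B'| \le |R| + |B| + n\sqrt 2 = \opt_I + O(n)$. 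After deleting duplicate co-located terminals of the same color, the lengths only decrease, so the snapped point set still admits a (not necessarily non-crossing) pair of trees of total length $\opt_I + O(n)$.

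The second step handles two issues: the added segments may create crossings between~$R'$ and~$B'$, and we need to relate $O(n)$ to $\eps\cdot\opt_I$. For the crossing issue I would argue that the instance can be assumed ``spread out'': because the two bounding boxes overlap and the instance has $n$ terminals, $\opt_I$ is at least a constant times the diameter~$L_0$ of the smallest bounding box (indeed $\opt_I \ge L_0$, since some tree must span two terminals at distance close to the box's extent, or more carefully, the diameter of the union of the two spanning structures is at least $L_0$). Choosing $L = \Theta(n/\eps)$ makes the grid granularity $g = L_0/L = O(\eps L_0 / n)$, so after rescaling to $g=1$ we have $L_0 = \Omega(n/\eps)$, hence $\opt_I = \Omega(n/\eps)$ and the additive term $O(n)$ from snapping is at most $\tfrac{\eps}{2}\opt_I$. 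To repair the crossings, I would invoke Lemma~\ref{lm:shell}: if $R'$ and $B'$ cross, reroute one of them with a shell around the other, at cost $\le 2\min(|R'|,|B'|)$; but this could blow up the length by a constant factor, which is too much.

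Instead, the cleaner route — and the one I expect is the real content — is to observe that the snapping segments are \emph{short} ($O(1)$ each, $O(n)$ in total), and that crossings between~$R$ and~$B$ cannot be introduced by the tree edges themselves (those were already non-crossing), only by the new stubs or by stubs meeting old edges; since every such interaction is localized within an $O(1)$-radius disk around a snapped terminal, one can locally reroute within these disks. The total rerouting cost is $O(n)$ per color, again absorbed into $\tfrac{\eps}{2}\opt_I$ by the choice of~$L$. Combining, we obtain two non-crossing trees on the perturbed terminal set of total length at most $\opt_I + O(n) \le (1+\eps)\opt_I$, which is an upper bound on $\opt_{I^*}$, as claimed. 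The main obstacle is the crossing-repair bookkeeping: one must ensure the even/odd parity placement keeps snapped terminals of different colors at distance $\ge 1$ so the local rerouting disks of opposite colors are disjoint and each repair stays within its own $O(1)$-budget, rather than cascading; the parity construction in the setup is exactly what makes this work, and spelling out that the local surgeries are independent and cheap is the crux of the proof.
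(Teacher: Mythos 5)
Your overall strategy matches the paper's: snap the terminals, attach each snapped terminal by an $O(1)$-length segment, repair crossings at $O(1)$ cost per terminal, and absorb the resulting $O(n)$ additive error by choosing $L=\Theta(n/\eps)$ so that $\opt_I\ge L=\Omega(n/\eps)$ (the paper gets this lower bound from the assumed overlap of the two bounding rectangles). The scaling part of your argument is fine. The gap is the crossing repair, which you yourself identify as the crux but leave as an assertion, and the justification you sketch does not hold: the even/odd parity placement only guarantees that differently colored snapped terminals are at distance $\sqrt2$ apart, while each connecting stub (and hence each ``repair disk'') has length/radius up to $\sqrt2$, so these regions are \emph{not} disjoint and the surgeries are not independent. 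Moreover, a single stub can be crossed by the other tree arbitrarily many times, so ``the interaction is localized'' does not by itself yield an $O(1)$ per-terminal budget, and a generic detour inside the disk could itself hit nearby structure of the other color---exactly the cascading you worry about.

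The paper's proof resolves this with a concrete mechanism that needs no separation or independence. Each snapped terminal is connected not to its old position but to the \emph{closest point of the same color on the optimal solution}; hence the stub has length at most $\sqrt2$ and its interior is disjoint from the same-color tree. If the stub crosses the other tree---no matter how many times---that tree is rerouted along \emph{two copies of the stub}, one on each side, and every cut edge is reconnected to these copies; this keeps the other tree connected, introduces no new crossings (the copies hug the stub), and costs at most $2\sqrt2$ extra, so the total per-terminal charge is $3\sqrt2$ and the overall increase is $3\sqrt2\,n\le\eps\opt_I$ for $L\in[3\sqrt2 n/\eps,\,6\sqrt2 n/\eps]$. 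Terminals are processed one at a time, so no disjointness of repair regions is ever needed. To make your proof complete you would have to supply this (or an equivalent) rerouting mechanism together with its length accounting; as written, the $O(1)$-per-terminal repair bound is not established.
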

  \begin{proof}
  Choose $L$ to be a power of 2 within the interval $[3\sqrt{2}n/\eps,6\sqrt{2
}n/\eps]$ and perturb the instance as described above. Consider an optimal 
solution to~$I$ and
  connect each terminal of~$I^*$ to the tree of its respective color.
  For every terminal of~$I^*$, proceed as follows:  Connect the terminal to 
  the closest point of the same color lying on the optimal solution.  If this 
  line segment crosses the tree of the other color, then reroute this tree 
  around the line segment by using two copies of the line segment.
  Two copies suffice even if the other tree is crossed more than once since
  all crossing edges can be connected to the two new line segments.
  The distance between the terminal and the tree
  is at most the distance between the
  terminal and the corresponding terminal in~$I$,
  which is bounded by~$\sqrt{2}$ as we are assuming the unit grid.  
  Hence, we pay at most~$3\sqrt{2}$ for connecting the terminal.
  Since the bounding rectangles of the input terminals overlap, $\opt_I\ge L$.
  Thus, the additional length of an optimal solution to~$I^*$ is
  \[\opt_{I^*} - \opt_I
     \leq 3 \sqrt{2} n
  \le \eps\cdot\opt_I~.\qedhere
  \]
  \end{proof}

The next lemma, proven analogously to Lemma~\ref{lem:perturbed},
 shows that we can transform a solution to the perturbed instance into one to the original instance.
  \begin{lemma}
    \label{lem:unperturbed}
    Given a solution $\mathcal T$ to the perturbed instance as
    defined in Lemma~\ref{lem:perturbed}, we can transform~$\mathcal T$ into a
    solution to the original instance, increasing its
    length by at most $\eps\cdot \opt_I$.
  \end{lemma}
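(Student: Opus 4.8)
The plan is to run the argument of Lemma~\ref{lem:perturbed} in reverse: instead of attaching the perturbed terminals to an optimal solution of~$I$, I attach the original terminals to the given solution~$\mathcal T$ of~$I^*$. Fix a color~$c$ and let~$T_c$ be its tree in~$\mathcal T$. Each original terminal~$t$ of color~$c$ was snapped to a grid point~$g(t)$, and~$g(t)$ is a vertex of~$T_c$: if the representative at~$g(t)$ survived it is a terminal of~$I^*$, and if~$t$ was deleted as a duplicate, then some other color-$c$ terminal on~$g(t)$ remained, so~$g(t)$ still lies on~$T_c$ in either case.

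For each original terminal~$t$ I add the straight segment from~$t$ to the closest point of the tree of its color; since~$g(t)$ itself is within distance~$\sqrt2$ of~$t$ on the unit grid, this segment has length at most~$\sqrt2$. If it crosses the tree of the other color, I reroute that tree around the segment using two parallel copies of it and moving all crossing edges onto these copies; as in Lemma~\ref{lem:perturbed}, two copies suffice even if the segment is crossed several times. Processing the terminals one by one lets the successive reroutings nest, so they do not interfere. Each added segment is a pendant edge attached to a vertex that is already present, so afterwards the two trees span all original terminals of their colors and are still pairwise non-crossing; any self-intersection created within a single tree can be removed by shortcutting, which only shortens that tree and hence introduces no new crossing with the other one.

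There are at most~$n$ original terminals, and each contributes a connecting segment of length at most~$\sqrt2$ plus at most two copies of it for rerouting, i.e.\ at most~$3\sqrt2$ in total. Hence the length increases by at most~$3\sqrt2\,n$, and, exactly as in Lemma~\ref{lem:perturbed}, the overlap of the bounding rectangles gives~$\opt_I\ge L\ge 3\sqrt2\,n/\eps$, so the increase is at most~$\eps\cdot\opt_I$.

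The only real difference from Lemma~\ref{lem:perturbed} is that several original terminals can share one grid point, so I must make sure the total number of added segments stays linear in~$n$ --- it does, since at most~$n$ are added --- and that adding several pendant segments at the same grid point together with their reroutings does not create crossings, which handling them sequentially takes care of. I expect this minor bookkeeping to be the only subtlety; the rest is a direct transcription of the proof of Lemma~\ref{lem:perturbed}.
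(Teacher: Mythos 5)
Your proposal is correct and is exactly what the paper intends: the paper gives no separate proof for this lemma but states it is ``proven analogously to Lemma~\ref{lem:perturbed}'', and your argument is precisely that analogue --- connecting each original terminal to the nearest point of its color's tree (distance at most $\sqrt2$ via the snapped grid point), rerouting the other tree with two copies of the segment, and bounding the total increase by $3\sqrt2\,n\le\eps\cdot\opt_I$ using $\opt_I\ge L$.
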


   In the following, we assume that the instance is perturbed.
   We place a quadtree in dependence of two integers $a,b\in[0, \dots, L-1]$ that we choose independently uniformly at random.
   We place the origin of the coordinate system on the bottom left corner of the bounding box of our instance.
   Then we take a box~$B$ whose width and height is twice the width and height of the bounding box.
   We place it such that its bottom left corner has coordinates $(-a,-b)$.
   Note that the bounding box is inside $B$.
   We extend the $(L\times L)$-grid to cover~$B$.  Thus, we have an
   $(L'\times L')$-grid with $L'=2L$. 

   Then we partition $B$ with a quadtree along the $(L'\times
   L')$-grid.  The partition is stopped when the current quadtree box
   coincides with a grid cell. 
   We define the \emph{level of a quadtree square} to be its depth in the quadtree.
   Thus, $B$ has level~$0$, whereas the level of a leaf is bounded by $\log L'=\log (2L)=O(\log n)$.
   Then, for each grid line $\ell$, we define its \emph{level}  as the highest (that is, of minimum value) level of all the quadtree squares that touch $\ell$ (but which are not crossed by it).

   Let $m=\lceil 4 \log L' / \eps \rceil$.
   On each grid line $\ell$ of level $i$, we place $2^i \cdot m$ equally spaced points.
   We call these points \emph{portals}.
   Thus, each square contains at most $m$ portals on each of its edges.
   A solution that crosses the grid lines only at portals is called \emph{portal-respecting}. We show that
   there is a close-to-optimal portal-respecting solution.
   Note that, in contrast to Arora, we first make the solution portal-respecting before reducing the number of crossings on each grid line.
  The proof of the following lemma is similar to the Arora's prove.

\begin{lemma}\label{lem:portalred}
   Let~$\opt_I$ be the length of an optimal solution to a \cest{2}
   instance~$I$, and let $\eps>0$ be the same as defined for $m$.  
   There exists
   a position of the quadtree and a portal-respecting
   solution to~$I$ of length at most $(1+\eps)\opt_I$.
\end{lemma}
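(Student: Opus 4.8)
The plan is to adapt the ``moving argument'' from Arora's analysis of EST to the two-tree setting. Fix an optimal solution $\mathcal T^\ast=(R^\ast,B^\ast)$ to the (perturbed) instance~$I$. As a preliminary step I would note that we may assume, up to a factor $1+\eps$ and preserving non-crossing, that $\mathcal T^\ast$ is polygonal with $O(n)$ segments: the length-minimizing realization of a fixed topology of the two non-crossing trees is taut and bends only at the $O(n)$ terminals and Steiner points. Recall from the proof of Lemma~\ref{lem:perturbed} that $\opt_I\ge L$ and $L=\Theta(n/\eps)$, so $n=O(\eps\cdot\opt_I)$ and hence the number of segments of $\mathcal T^\ast$ is $O(\eps\cdot\opt_I)$ as well; this makes the ``$+1$ per segment'' terms below negligible.

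Next I would run the standard probabilistic estimate over the random shift $(a,b)$ defining the quadtree. A short calculation shows that a fixed axis-parallel grid line has level~$i$ with probability $O(2^i/L)$, and on a level-$i$ line the $2^i m$ portals are spaced $(L'/2^i)/m$ apart, so moving a single crossing of that line to the nearest portal costs $O((L'/2^i)/m)$: if the crossing is a sub-arc $A$--$p$--$B$ of one tree (with $p$ on the line and $A,B$ just outside a thin strip around it), replacing it by $A$--$\pi$--$B$ for the nearest portal~$\pi$ adds at most $2|p\pi|$. Summing over the $O(\log L')$ levels, the expected cost of making one fixed crossing portal-respecting is at most $\log L'/(2m)$. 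The total number of crossings of $\mathcal T^\ast$ with grid lines is at most $2|\mathcal T^\ast|+O(\eps\cdot\opt_I)=O(\opt_I)$, since each of its $O(\eps\cdot\opt_I)$ segments meets at most $(\text{its length})+1$ vertical and $(\text{its length})+1$ horizontal grid lines. By linearity of expectation, the expected total increase is at most $\opt_I\log L'/m+O(\eps^2\opt_I)$, which by the choice $m=\lceil 4\log L'/\eps\rceil$ is at most $\eps\cdot\opt_I$ (assuming $\eps$ small, else the PTAS is trivial). Since the minimum is at most the mean, some placement of the quadtree yields a portal-respecting solution of length at most $(1+\eps)\opt_I$ (combining with the $1+\eps$ factor of the first step, with suitably split error budgets).

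The only genuinely new ingredient---and the step I expect to be the main obstacle---is checking that every reroute can be carried out while keeping $R$ and $B$ non-crossing, each connected, and each spanning its terminals. Connectivity and terminal-spanning are handled by performing each reroute locally, inside a thin strip $S_\ell$ around the grid line~$\ell$ in question, with the strips chosen thin enough to be pairwise disjoint and to avoid small neighbourhoods of the (grid-point) terminals; a crossing that occurs at a terminal is left in place, and we may regard terminals, being grid points, as portals. For the non-crossing invariant, the key point is that along a fixed line~$\ell$ the crossings of $R$ and of $B$ occur in some linear order, and the map sending a crossing point to its nearest portal is weakly order-preserving; hence we can send every crossing to its nearest portal and route the reroutes inside $S_\ell$ consistently with that order, nesting the arcs that land on a common portal according to the cyclic order in which they enter~$S_\ell$. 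Because this lemma does not yet restrict how often a portal is used, there is no obstruction to such nesting and no new $R$--$B$ crossing is created. Finally, a reroute around a vertical line can create new non-portal crossings with horizontal lines (and vice versa); as in Arora's proof this is absorbed by performing the reroutes in a suitable order, the extra crossings being covered by the same estimate. Formalising the ``nesting at a shared portal'' claim, and the non-crossing-preserving polygonalization, is where the real care is needed; the remainder is Arora's argument.
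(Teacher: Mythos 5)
Your proposal follows essentially the same route as the paper's proof: Arora's random-shift argument, snapping each grid-line crossing to its nearest portal, bounding the expected detour per crossing by the inter-portal spacing times the level probability, bounding the total number of crossings by $O(\opt_I)$, and concluding by ``minimum at most mean.'' The only differences are bookkeeping (you absorb the ``$+1$ per segment'' terms via $n=O(\eps\cdot\opt_I)$, the paper via $2l+2\le 4l$) and that you spell out why the snapping can be done without creating $R$--$B$ crossings, which the paper asserts in a single sentence.
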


\begin{proof}
  Fix an optimal solution $\mathcal T$.
	Move all crossings on the grid lines to the
  closest portals by adding a line segment on each side of the grid.
	Note that the modified solution remains crossing-free.

  Consider a grid line $\ell$ that has a non-empty intersection with the bounding box $B$.
  Let $t(\ell)$ be the number of crossing points between $\ell$ and $\mathcal T$.
  If $i$ is the level of $\ell$, then the inter-portal distance on $\ell$ is $L'/(2^i \cdot m)$.
  Since the position of the quadtree has been chosen uniformly at random,
  the probability
  that the level of $\ell$ is $i$ is at most $2^{i} / L'$.
  Thus, the expected length increase of $\mathcal T$ due to moving the crossings to the portals of $\ell$ is at most
  \[\sum_{i=1}^{\log L'} {\frac{2^i}{L'}} \cdot t(\ell) \cdot {\frac{L'}{2^i \cdot m}} \le {\frac{t(\ell) \log L'}{m}} \le \eps {\frac{t(\ell)}{4}}~,\]
  where the last inequality follows from $m\ge 4 \log L' / \eps$.
  Thus, the expected total length increase is at most $(\eps/4) \sum_{\mathrm{gridline}\,\ell} t(\ell)$.

  It holds that $\sum_{\textrm{gridline\,}\ell} t(\ell) \le 4 \opt$.
  To see this, consider any line segment of the solution. Let $l$ be the length of the line segment.
  Given the granularity $g=1$ of the grid, the line segment crosses at most $l+1$ horizontal grid lines and at most
  $l+1$ vertical grid lines; hence, its contribution to the left-hand side of the equation is at most $4 \cdot l$.

  Thus, we have shown that the expected length increase is at most  $\eps \opt$.
  But then, there exists a position of the quadtree for which the
  total length increase is bounded by $\eps \opt$. 
  We can try out all positions of the quadtree which increases the total running time by the factor of $O(n^2)$.
\end{proof}

  The last ingredient for our dynamic programming is to reduce the number of crossings
 in every portal. We call a solution~\emph{$r$-light}
    if each portal is crossed at most $r$ times.
    (Note that we use a different definition than Arora.
    He defined a solution to be $r$-light if a grid line is crossed at most $r$ times.)

  In the following, we explain an operation which we
  call a \emph{portal-crossing reduction}. We are given a
  portal-respecting solution
  consisting of two Steiner trees $R$ and $B$ (red and blue) and we want to
  reduce (that is, modify without increasing its length) it such that
  $R$ and $B$ pass through each portal at most three times in total.

  \begin{lemma}\label{lem:3light}
    Every portal-respecting solution of \cest{2} can be transformed
    into a 3-light portal-respecting solution without increasing its length.
  \end{lemma}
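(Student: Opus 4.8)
The plan is to show that whenever some portal $p$ is crossed at least four times by the union $R \cup B$, we can locally reroute the trees so as to strictly decrease the number of crossings at $p$, without creating new crossings at other portals and without increasing the total length. Since each such step decreases the finite quantity $\sum_{\text{portals }p} (\text{crossings at }p)$, the process terminates in a $3$-light solution. First I would fix a portal $p$ and consider the grid line $\ell$ through $p$; $\ell$ separates the plane locally into two sides, call them the left side $H^-$ and the right side $H^+$. Each crossing of $p$ corresponds to an edge (or edge-piece) of $R$ or $B$ that passes from $H^-$ to $H^+$ exactly at $p$. I would classify the crossings by color and note that, because $R$ and $B$ are non-crossing, along the tiny neighborhood of $p$ the crossing strands of the two colors appear in some order but cannot interleave in a way that would force a crossing --- more precisely, near $p$ all red strands can be taken to be consecutive in the cyclic order and all blue strands consecutive, or else we first argue we may assume this.

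The key step is the reduction itself. Suppose $p$ is crossed at least twice by the same color, say red, by strands $e_1, e_2$. Removing the portion of $R$ on one side between the two crossings and reconnecting along $\ell$ (an infinitesimally short detour, since both crossings are at the \emph{same} point $p$) merges the two crossings into zero or replaces them by a single crossing, while only adding length along $\ell$ near $p$, which is zero in the limit; one has to check this keeps $R$ connected and spanning all red terminals, using that $R$ is a tree --- deleting an edge splits it into two components, each of which still contains a strand reaching $p$, so reattaching at $p$ restores connectivity. Doing this greedily, I can ensure each color crosses $p$ at most once, giving at most two crossings total --- in fact better than the claimed bound, so the real content is handling the interaction between the colors when one wants to also merge strands of different colors, or rather: showing the merging of \emph{same}-color strands never forces a crossing with the other color. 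That is where non-crossingness of the original solution is used: the detour along $\ell$ stays on $\ell$ (or arbitrarily close to it), and the other tree crosses $\ell$ only at portals, so near $p$ but along $\ell$ there is room to route without meeting the other color, \emph{except} possibly at $p$ itself --- hence the bound is $3$ rather than $2$, reflecting that after we have one red strand and one blue strand through $p$, a third "shared" passage may be unavoidable when, e.g., a red subtree must pass $p$ on one side while blue occupies $p$.

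I expect the main obstacle to be the topological bookkeeping: making precise that the local rerouting along $\ell$ at $p$ (a) does not increase total length, which requires the detour to have length tending to $0$ and the removed portion to have nonnegative length --- fine --- but also (b) does not introduce crossings with the \emph{other} tree elsewhere on $\ell$ or with portals we have already fixed, and (c) preserves the portal-respecting property, i.e.\ the rerouted pieces still cross grid lines only at portals. Point (b) is subtle because the ``removed portion'' of $R$ between the two crossings of $p$ may wander far from $p$ and across many grid lines; one has to argue we are removing a path of $R$ and the remaining two components of $R\setminus(\text{that path})$ can be reconnected by a single short arc at $p$ without the arc crossing $B$, which holds precisely because the arc can be pushed against $\ell$ on the side where $B$ is absent near $p$. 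I would also need to handle the degenerate case where after removal one of the components no longer touches $p$; in that case I reattach at a different point of $\ell$ inside the same quadtree square and appeal to the fact that such a point can be made a portal, or argue the square is small enough that the extra length is absorbed --- but most cleanly I would phrase the whole reduction as an operation that only ever shortcuts along $\ell$ between two crossings \emph{at the same portal}, so no new grid-line crossings are ever created and termination is immediate. Once the single-portal reduction is established, the lemma follows by applying it to every over-crossed portal in turn, each application strictly decreasing the total crossing count.
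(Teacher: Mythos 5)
There is a genuine gap at the heart of your argument. Your reduction step is ``merge two same-color strands crossing the portal $p$ by a shortcut along $\ell$ of vanishing length,'' and you conclude you can greedily get down to at most one crossing per color. But this only works when the two same-color passes are \emph{adjacent} in the local order of passes through $p$. The hard case -- and the entire content of the lemma -- is an alternating pattern such as $rbrb$ (or $rbr$): the two red passes are separated \emph{at $p$ itself} by a blue strand, and that blue strand is a full curve continuing on both sides of $\ell$ and attached to the rest of $B$. Any arc joining the two red passes near $p$ must cross it; there is no side of $\ell$ ``where $B$ is absent near $p$.'' You notice this tension (your ``except possibly at $p$ itself'' remark, and your observation that you seem to be proving a $2$-light bound, stronger than claimed), but you never resolve it -- the claim that same-color merging ``never forces a crossing with the other color'' is simply false for alternating sequences, and no local shortcut along $\ell$ can fix it. This is also why the lemma says $3$ and not $2$: a pattern like $rbr$ can be genuinely irreducible.

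The paper handles exactly this case by a non-local, topological argument rather than a shortcut: after collapsing adjacent same-color passes (your easy step, which is fine), one may assume the passes alternate in color; then for any four consecutive alternating passes $a,b,c,d$ one \emph{cuts} the two middle passes $b$ and $c$, which splits each tree into two components, and uses the disjointness of $R$ and $B$ (a Jordan-curve style argument) to pin down which cut-ends lie in the same component; this forces a reconnection pattern in which the trees can be rewired through the portal so that only $a$ and $d$ survive, with cycles broken by deleting their longest pieces, all at zero extra length because the portal has zero width. Iterating reduces any alternating sequence to length at most $3$. That cutting-and-reconnecting argument, driven by the global connectivity structure of the two disjoint trees, is the missing idea in your proposal; without it the greedy local merging does not go through. (A smaller issue: your fallback of ``reattach at a different point of $\ell$ and make it a portal'' is not available either, since the portal set is fixed by the construction and the solution must remain portal-respecting.)
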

  \begin{proof}
\begin{figure}[t]
    \centering
    \begin{subfigure}[b]{.23\textwidth}
      \centering
      \includegraphics[page=1]{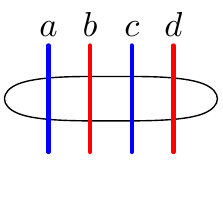}
      \caption{}
      \label{fig:portal-a}
    \end{subfigure}
    \hfil
    \begin{subfigure}[b]{.23\textwidth}
      \centering
      \includegraphics[page=2]{portal_new}
      \caption{}
      \label{fig:portal-b}
    \end{subfigure}
    \hfil
    \begin{subfigure}[b]{.23\textwidth}
      \centering
      \includegraphics[page=3]{portal_new}
      \caption{}
      \label{fig:portal-c}
    \end{subfigure}
    \hfil
    \begin{subfigure}[b]{.23\textwidth}
      \centering
      \includegraphics[page=4]{portal_new}
      \caption{}
      \label{fig:portal-d}
    \end{subfigure}
  \caption{A portal modification for four passes.}
  \label{fig:portal}
\end{figure}
  Consider a sequence of passes through a portal. We assume that there are no terminals in the portals. 
  If two adjacent passes belong to the same tree, then we can eliminate one of them
  by snapping it to the other one. Note that this may create cycles, but they can
  be broken by removing the longest part of each cycle. Therefore, we can assume that the passes
  form an alternating sequence. It suffices
  to show that any alternating sequence of four passes
  can be reduced to two passes by shortening the trees.
  Let $a,b,c$, and $d$ be such a sequence as shown in Figure~\ref{fig:portal-a}.
  We cut the passes $b$ and $c$. This results in two connected components in each tree.
  W.l.o.g., $a$ and the upper part of $c$ belong to the same connected component; see
  Figure~\ref{fig:portal-b}. Otherwise, we can change the colors because
  \begin{inparaenum}[(i)]
  \item $a$ and the lower part of $c$ are connected, and
  \item the upper part of $b$ and $d$ are connected.
  \end{inparaenum}

  Since $R$ and $B$ are disjoint, $d$ and the lower part of $b$
  are in the same connected component; see Figure~\ref{fig:portal-c}. Then, we connect the component as shown in
  Figure~\ref{fig:portal-d} and shorten the trees (e.g., the lower part of $b$ can be
  reduced to a terminal of $R$). Note that the passes~$a$ and~$d$ remain in the
  solution, while the passes~$b$ and~$c$ are eliminated. We repeat the procedure
  for the remaining passes, until there are at most three passes left.
  The length of the solution does not increase because the portal has zero width.
\end{proof}

With the next Lemma~\ref{lem:dp-ptas}, we show how to find a close-to-optimal $3$-light 
portal-respecting solution to the perturbed instance. We assume that an
appropriate quadtree (as defined in Lemma~\ref{lem:portalred}) is given.

\begin{lemma}\label{lem:dp-ptas}
    Let $I$ be an instance of \cest{2} with $n$ terminals, and $\opt_{I}$ be the length
    of an optimal 3-light portal-respecting solution to $I$.
    For every $\eps>0$, there is an algorithm finding
    a solution to $I$ of length at most $(1+\eps)\opt_{I}$
    in time $O(n^{O(1/\eps)})$ using $O(n^{O(1/\eps)})$ space.
  \end{lemma}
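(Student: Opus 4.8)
The plan is to adapt Arora's quadtree dynamic program~\cite{Aro98} for the Euclidean Steiner tree to the two-tree setting; the essential new ingredient is that the DP states must carry enough information to certify that the red tree $R$ and the blue tree $B$ stay non-crossing. For each quadtree square $S$ I would define an \emph{interface} consisting of: (a)~for every portal on $\partial S$, how many times $R$ and how many times $B$ pass through it --- by Lemma~\ref{lem:3light} the two numbers sum to at most $3$, so there are only $O(1)$ choices per portal --- and (b)~a \emph{monochromatic non-crossing partition} of the resulting cyclically ordered crossing slots into blocks, one block per connected component of $R\cap S$ and one per connected component of $B\cap S$. The structural fact I would rely on is that a pair of non-crossing plane forests inside the disk $S$ with leaves on $\partial S$ induces exactly such an interface and, conversely, every interface is realizable by such a pair; hence an interface carries precisely the data needed to decide, when four children are merged into their parent, whether the portal multiplicities and colors match along the shared internal grid segments, whether each color stays acyclic, and how the parent's partition arises from the children's. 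On top of this I would add the usual bookkeeping, exactly as in Arora's Steiner-tree DP: no component of either color inside a square is ``floating'' (every component touches the boundary, except in the degenerate case where a square already contains the whole tree of a color), and at the root square~$B$ each color forms one component spanning all its terminals; this is what forces the reconstructed family to be two trees spanning the red and the blue terminals.

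Since every square has $O(m)=O(\log n/\eps)$ portals on its boundary, there are $2^{O(m)}=n^{O(1/\eps)}$ interfaces per square, and the quadtree has $O((L')^2)=\mathrm{poly}(n)$ squares. I would fill a table indexed by (square, interface) with the minimum total length of a non-crossing red/blue forest inside the square that realizes the interface. The entry at an internal square is the minimum, over all compatible $4$-tuples of children interfaces, of the sum of the children's entries; merging adds no length, since the new crossings sit at matched portals already paid for by the children, so the root entry equals the total length of the glued family. There are $(2^{O(m)})^4=n^{O(1/\eps)}$ tuples per internal square, so the table is computed in $n^{O(1/\eps)}$ time and stored in $n^{O(1/\eps)}$ space; standard backtracking reconstructs the actual forests, the portal bookkeeping keeps the gluing along shared grid edges crossing-free, and the floating-component/root constraints make the output a feasible \cest{2} solution. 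Conversely, any optimal $3$-light portal-respecting solution restricts, square by square, to a valid family of interfaces, so the DP's value is at most $\opt_I$ up to the base-case error analyzed below; cycles that merging might create can always be cut without increasing length, exactly as in Arora's algorithm.

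The base case --- the leaf squares, which are unit grid cells --- is the main obstacle. In contrast to Arora, our notion of lightness bounds crossings only per \emph{portal}, so a unit-cell edge may carry as many as $3m=O(\log n/\eps)$ crossings, and an exact enumeration of Steiner topologies would be too costly. What we must compute for a leaf $S$ and an interface is the minimum-length pair of non-crossing red/blue forests inside $S$ that realizes the interface and runs through the terminals at the corners of $S$ (at most one of each color) --- in effect a tiny, boundary-constrained instance of \cest{2}, as flagged by item~(iii) of our list of modifications. I would solve it approximately: enumerate the $2^{O(m)}=n^{O(1/\eps)}$ combinatorial shapes of such a pair of forests (which fixes all nesting relations, so the non-crossing property becomes automatic for each shape), and for each shape compute the cheapest realization whose bends and Steiner points lie on a sufficiently fine auxiliary sub-grid inside $S$. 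A snapping argument --- moving each of the $O(m)$ bends and Steiner points to the nearest sub-grid point, after first pushing the two forests slightly apart so that non-crossing is preserved --- shows that confining the forests to such a sub-grid inflates their length inside $S$ by a factor at most $1+\eps$; summing the per-cell losses over all leaves and feeding the base values into the DP gives a solution of total length at most $(1+\eps)\opt_I$, while each per-cell computation still fits in $n^{O(1/\eps)}$ time. The remaining routine points --- that the quadtree shift is already fixed by Lemma~\ref{lem:portalred}, and that ``no terminal lies in a portal'' lets interfaces ignore terminals except at cell corners --- I would dispatch briefly.
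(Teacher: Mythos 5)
Your DP skeleton is essentially the paper's: states consist of a quadtree square, a per-portal crossing pattern with at most three passes (justified by Lemma~\ref{lem:3light}), and a non-crossing partition of the boundary crossings into monochromatic blocks; the counting ($O(\log n/\eps)$ portals per square, Catalan-type bound on partitions) and the merge step over the four children are the same, giving $n^{O(1/\eps)}$ states and time. Where you diverge is the leaf case, which is exactly the part the paper flags as the novel ingredient --- and there your argument has a genuine gap. The paper's leaf case is much simpler than yours: in a unit cell (empty, or with terminals only at its corners, which are then treated like portals), each block of the non-crossing partition is solved as an \emph{independent} EST instance by Arora's PTAS, and each resulting tree is projected into the convex hull of its block; since the partition is non-crossing and the block points lie on the cell boundary, these hulls are pairwise disjoint, so the union is automatically crossing-free and near-optimal. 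No topology enumeration, no sub-grid, no snapping is needed.

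Your replacement for the leaf case does not go through as written. First, the claim that snapping the $O(m)$ Steiner points and bends to an auxiliary sub-grid inflates the length \emph{inside each cell} by a factor $1+\eps$ is not justified and can fail: a leaf cell may carry $\Theta(m)$ portal crossings while the forest inside it is arbitrarily short, so an additive loss of order (number of snapped points) times the sub-grid spacing need not be at most $\eps$ times the in-cell length. The natural repair is to account additively over all leaves and compare against $\opt_I\ge L$ with a polynomially small spacing (as in the perturbation lemmas), or to bound the in-cell length from below in terms of the interface; your write-up does neither, yet your global $(1+\eps)\opt_I$ bound rests on the per-cell multiplicative claim. Second, the $2^{O(m)}$ bound on ``combinatorial shapes'' needs an argument --- generic Steiner topologies on $t$ points number $t^{\Theta(t)}$, which with $t=\Theta(\log n/\eps)$ would exceed $n^{O(1/\eps)}$ --- so you must explicitly exploit that the boundary points are in convex position and only planar (non-crossing) topologies matter; relatedly, you never say how the cheapest sub-grid realization of a fixed shape is computed within the time bound, nor why snapping both colors to the same sub-grid (even after ``pushing them apart'') preserves non-crossingness. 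All of these complications disappear if you use the paper's convex-hull decomposition of the leaf problem into monochromatic EST instances.
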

  \begin{proof}
    We use dynamic programming with
    a subproblem consisting of
    \begin{enumerate}[(a)]
      \item\label{dyn-quad} a square of the quadtree,
      \item\label{dyn-sequence} a sequence of up to three red and blue points
        on each portal on the border of the square, and
      \item\label{dyn-partition} a \emph{non-crossing} partition of these
        points into sets of the same color.
    \end{enumerate}
     A partition of these points
    is non-crossing if for no four points $a,b,c,d$, occurring in that order on the boundary of the square, it holds that $a$ and $c$ belong to one set of the partition, and $b$ and $d$ to another one.
    The goal is to find an optimal collection of crossing-free red and blue
    Steiner trees, such that each set of the partition and each terminal inside the square is
    contained in a tree of the same color.

    The base case of dynamic programming is a unit square, which is either empty or contains terminals only at corners of the square.	If the square is empty, we consider each set of the partition as an instance of \cest{1} and solve it by the PTAS for EST~\cite{Aro98}.
	For each point set, we force its Steiner tree to lie inside its convex hull, by projecting any part of the solution
	outside the convex hull to its border.
	Since the partition is non-crossing, the convex hulls of its point sets are pairwise disjoint.
	Therefore, the Steiner trees and their union is also a close-to-optimal solution to the base case.
	If the square contains (constantly many) terminals at the
    corners, these terminals are treated in the same way as portals.

	For composite squares in the quadtree, we proceed as follows.
	For the four squares that subdivide the composite square, we consider all 
  combinations of all possible sequences~(\ref{dyn-sequence}) and 
  partitions~(\ref{dyn-partition}) that match together and match the subproblem. 
  In dynamic programming, we already have computed a close-to-optimal solution 
  to every choice of sequence~(\ref{dyn-sequence}) and 
  partition~(\ref{dyn-partition}) of each of the four squares; taking the best 
  combination gives a close-to-optimal solution.

	The size of the dynamic programming table is proportional to the number of 
  subproblems, that is, 
  (\ref{dyn-quad})$\times$(\ref{dyn-sequence})$\times$(\ref{dyn-partition}).
	There are $O(n^2)$ squares~(\ref{dyn-quad}) in the quadtree in total.
	Each square contains $O(\log n/\eps)$ portals.  For each portal, there is a 
  constant number of possible sequences of up to three colored points. Thus, 
  there are 
	$2^{O(\log n/\eps)} = n^{O(1/\eps)}$ possible sequences~(\ref{dyn-sequence}).
	Since the number of non-crossing partitions of a set of $k$ elements is the 
  $k$'th Catalan number~$C_k$, we have 
  $C_{O(\log n/\eps)}<2^{O(\log n/\eps)}=n^{O(1/\eps)}$ possible 
  partitions~(\ref{dyn-partition}).
	In total, we consider $n^{O(1/\eps)}$ subproblems in the dynamic programming.
	
	The running time to solve the base case is
	polynomial in $m=O(\log n/\epsilon)$.  The running time to handle a composite square is polynomial in $(n^{O(1/\epsilon)})^4$, which is $n^{O(1/\epsilon)}$.  Thus, the total running time is bounded by $n^{O(1/\epsilon)}$.
     \end{proof}

Now we prove the main result of the section.

\begin{theorem}
\label{thm:2trees}
  \cest{2} admits a PTAS.
\end{theorem}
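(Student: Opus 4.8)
The plan is to assemble the PTAS by chaining the five lemmas proved above, carrying a single internal accuracy parameter~$\eps'$ through all of them and choosing it as a fixed fraction of~$\eps$ so that the accumulated error stays within $1+\eps$. Fix $\eps>0$; I would set $\eps'=\eps/7$ (any sufficiently small constant multiple of~$\eps$ works). As remarked just before Lemma~\ref{lem:perturbed}, if the bounding boxes of the two color classes are disjoint we simply run a PTAS for EST on each class separately and are done, so from now on assume the bounding boxes overlap.

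First I would perturb: apply Lemma~\ref{lem:perturbed} with parameter~$\eps'$ to obtain, on an $(L\times L)$-grid with $L=O(n/\eps')$, a perturbed instance~$I^*$ satisfying $\opt_{I^*}\le(1+\eps')\opt_I$. Working with~$I^*$, Lemma~\ref{lem:portalred} (again with~$\eps'$) tells us that for at least one of the $O(L^2)$ admissible shifts $(a,b)$ of the randomly placed quadtree there is a portal-respecting solution of length at most $(1+\eps')\opt_{I^*}$; since we cannot identify the good shift, we try all $O(L^2)=O(n^2)$ of them, which multiplies the running time by $O(n^2)$ only. For the good shift, Lemma~\ref{lem:3light} converts the portal-respecting solution into a $3$-light portal-respecting one without increasing its length, so the optimal $3$-light portal-respecting solution of~$I^*$ under that quadtree has length at most $(1+\eps')\opt_{I^*}$.

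Next I would run the dynamic program of Lemma~\ref{lem:dp-ptas} with parameter~$\eps'$ on~$I^*$ with the fixed quadtree: it outputs in time $n^{O(1/\eps')}$ a crossing-free (hence feasible) Steiner forest for~$I^*$ of length at most $(1+\eps')$ times that of the optimal $3$-light portal-respecting solution, i.e.\ at most $(1+\eps')^2\opt_{I^*}\le(1+\eps')^3\opt_I$. Finally, Lemma~\ref{lem:unperturbed} turns this into a solution for the original instance at an additive cost of at most $\eps'\opt_I$, so the forest we output has length at most $\big((1+\eps')^3+\eps'\big)\opt_I$, which is at most $(1+\eps)\opt_I$ by the choice of~$\eps'$. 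The running time is $O(n^2)\cdot n^{O(1/\eps')}=n^{O(1/\eps)}$, plus polynomial overhead for the perturbation and the EST-PTAS calls in the base cases, and the algorithm is deterministic because we enumerate quadtree shifts rather than sampling one.

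Since every ingredient is already available as a lemma, the theorem itself is pure assembly and there is no real obstacle; had the lemmas not been in place, the genuinely hard parts would have been the portal-crossing reduction to $3$-lightness (Lemma~\ref{lem:3light}), which relies crucially on the two trees being disjoint, and the non-crossing dynamic program with its special base case (Lemma~\ref{lem:dp-ptas}). In the write-up the only things to watch are that the forest returned by Lemma~\ref{lem:dp-ptas} need not itself be portal-respecting — its empty-square pieces are arbitrary near-optimal Steiner trees confined to pairwise disjoint convex hulls — yet is still a valid non-crossing Steiner forest, and that three nested $(1+\eps')$ factors plus one additive $\eps'\opt_I$ term collapse to $1+\eps$ under a constant rescaling of~$\eps$.
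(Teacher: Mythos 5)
Your proposal is correct and follows essentially the same route as the paper's own proof: chaining Lemmas~\ref{lem:perturbed}, \ref{lem:portalred}, \ref{lem:3light}, \ref{lem:dp-ptas} and~\ref{lem:unperturbed}, derandomizing by enumerating the $O(n^2)$ quadtree shifts, and absorbing the accumulated $(1+\eps')$ factors plus the additive $\eps'\opt_I$ term by rescaling $\eps'$ as a constant fraction of $\eps$. If anything, your accounting of the error factors is slightly more careful than the paper's, which compresses the perturbation and portal-respecting losses into a single $(1+\eps)$ factor.
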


\begin{proof}
  Consider an instance $I$ of \cest{2}, let $\opt$ be the length of the optimum, 
  and choose $\eps>0$. By Lemmas~\ref{lem:perturbed}, \ref{lem:portalred} 
  and~\ref{lem:3light}, the length, $\opt'$, of an optimal $3$-light 
  portal-respecting solution to the perturbed version of $I$ is a 
  most $(1+\eps)\opt$. Using Lemma~\ref{lem:dp-ptas}, we find a $3$-light 
  portal-respecting solution to the perturbed instance of length at most 
  \[(1+\eps)\opt' \le (1+\eps)(1+\eps)\opt~.\]
  By Lemma~\ref{lem:unperturbed}, we transform the solution into a solution 
  to $I$ by increasing its length by at most $\eps\cdot\opt$. Therefore, for
  every $\eps'>0$, we can construct a solution to~$I$ of length 
  \[(1+\eps)(1+\eps)\opt + \eps\cdot\opt \le (1+\eps')\opt\]
  by choosing $\eps>0$ appropriately.
\end{proof}

\section{Algorithm for \cest{3}}\label{sec:3cest}

The approach for \cest{2} described in Section~\ref{sec:2cest}
cannot be directly applied to \cest{3} since optimal trees may need to pass
portals many times.  For example, the three paths crossing the portal in
Fig.~\ref{fig:spiral} are difficult because 
we cannot locally reroute to make them $O(1)$-light as in
Lemma~\ref{lem:3light}. 

Instead, we now improve the approximation ratio of
$3+\eps$ (from Theorem~\ref{thm:kcest-k+eps}) to $5/3+\eps$.
We re-use some ideas of the approach for \cest{2}.

\begin{figure}[tb]
  \centering
  \includegraphics[scale=1.5]{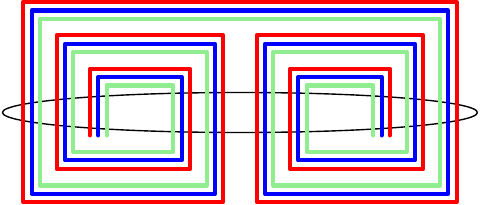}
    \caption{A difficult portal crossing of a \cest{3} instance.}
    \label{fig:spiral}
\end{figure}

  To this end, take an optimal solution $T$ for \cest{3}.  The terminals are red,
  green, and blue; we call the corresponding trees $R$, $G$, and $B$.
  We assume that $B$ is the cheapest among the three trees.  Now, we
  construct a quadtree partitioning the plane and choose the portals,
  for a given~$\eps$, as described in Section~\ref{sec:2cest}.  We
  then make the solution portal-respecting, which
  results in a solution $T^*$ consisting
  of trees $R^*$, $G^*$, and $B^*$. In expectation, this increases the length of
  each of the trees (and hence, of~$T$) by a factor of at most $1+\eps$.

  First, we show that we need few portal passes if the blue and the green
  tree do not \emph{meet} at any portal, that is, no blue and green
  passes are adjacent.

\begin{lemma}
  \label{lm:7passes}
  Consider a portal-respecting solution $T^*$ to \cest{3} consisting of
  trees $R^*$, $G^*$, $B^*$.  If $B^*$ and $G^*$ do not meet at
  any portal, then $T^*$ can be transformed into a 7-light
  portal-respecting solution.
\end{lemma}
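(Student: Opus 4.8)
The idea is to reduce to the 3-light argument from Lemma~\ref{lem:3light} applied color-pair by color-pair. Fix a portal $P$ and look at the sequence of passes through it. Since $B^*$ and $G^*$ never meet at $P$, the passes of $B^*$ and $G^*$ are "separated" by red passes: reading the cyclic (or linear) sequence of passes at $P$, between any blue pass and any green pass there is at least one red pass. So the plan is: first collapse adjacent same-color passes exactly as in Lemma~\ref{lem:3light} (snapping one pass to the neighboring one of the same color, breaking any cycles this creates by deleting the longest arc), so that we may assume the sequence is alternating in the sense that no two consecutive passes have the same color. Then the non-meeting hypothesis forces the pattern of colors at $P$ to be highly constrained — effectively the blue and green passes are each interleaved only with red passes.

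Next I would bound blue and green passes separately. Consider the subsequence of passes belonging to $R^* \cup B^*$ at the portal. This is a portal-respecting pair of trees in exactly the setting of Lemma~\ref{lem:3light}, so the same cut-and-reconnect move (cut two passes, use disjointness of $R^*$ and $B^*$ to identify which components must be joined, reconnect through the portal of zero width, shorten) reduces them to at most $3$ passes total at $P$ — but we must be careful to perform this move \emph{without disturbing the green passes}. Because the portal has zero width, the rerouting of $R^* \cup B^*$ happens "inside the portal" and can be done so that it does not cross any green pass sitting between them; equivalently, any green pass lying between the two cut blue/red passes can be slid to one side. Applying the same to $R^* \cup G^*$ reduces those to at most $3$ passes. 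The only passes that can survive and be counted twice are red ones; carefully, after the reductions we have at most $3$ passes involving blue (so $\le 2$ blue, $\le 1$ red kept from that reduction... ) — tallying, blue contributes $\le 2$, green contributes $\le 2$, and red contributes the remaining passes that glue things together. A bookkeeping of which passes the two reductions keep gives a total of at most $7$ passes at $P$: roughly, $2$ blue $+ 2$ green $+ 3$ red.

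The main obstacle will be making the two reductions (for the $R/B$ pair and the $R/G$ pair) \emph{compatible}, i.e.\ showing that performing the portal-crossing reduction on one color pair does not create new meetings or new passes for the third color, and that the reconnection moves can be carried out simultaneously at the zero-width portal without introducing crossings between $R^*$, $G^*$, $B^*$. I would handle this by doing the reductions sequentially — first reduce $R^* \cup B^*$ to at most $3$ passes (treating all green passes at $P$ as fixed obstacles, which is fine since at a zero-width portal the blue/red reroute can be pushed arbitrarily close to one side), then reduce $R^* \cup G^*$ — and then argue that the second reduction cannot increase the blue count, because it only cuts and reglues red and green material. A short case analysis on how many red passes are "shared" between the two reductions then yields the bound $7$, and as in Lemma~\ref{lem:3light} none of these moves increases total length since the portal has zero width and any cycles created are broken by deleting their longest arc. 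Repeating over all portals gives the claimed 7-light portal-respecting solution.
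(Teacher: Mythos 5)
There is a genuine gap, and it sits exactly at the point you flag as ``the main obstacle.'' Your plan is to run the two-color portal-crossing reduction of Lemma~\ref{lem:3light} separately on the pair $R^*\cup B^*$ and on the pair $R^*\cup G^*$, treating the third color as a fixed obstacle. But the reduction of Lemma~\ref{lem:3light} cuts the two \emph{middle} passes of four \emph{consecutive} alternating passes and reconnects the stubs by segments running along the zero-width portal; it is only because no intact pass lies between the four chosen passes that these reconnection segments are crossing-free. In the tri-colored situation the red/blue passes you want to reduce are in general \emph{not} consecutive at the portal: green passes (together with the red passes that separate them from blue, by the no-meet hypothesis) lie in between, e.g.\ in the $r$-alternate sequence $brgrbrgrb$. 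A reconnection segment joining two blue stubs across such a stretch would have to cross the intact green and red passes in between, and your proposed fix --- ``slide the green pass to one side'' or ``push the reroute arbitrarily close to one side'' --- is not available: a pass is attached to its tree on both sides of the grid line, and changing its position in the left-to-right order of passes at the portal relative to a red pass forces a crossing with that red pass. So the claim ``$R^*\cup B^*$ is exactly the setting of Lemma~\ref{lem:3light}'' does not hold, and the subsequent tally ($\le 2$ blue $+\le 2$ green $+\le 3$ red) is not justified; indeed the extremal $7$-pass sequences, such as $rgrbrgr$, contain four red passes, which already shows the two pairwise reductions do not compose in the way your bookkeeping assumes.

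The paper's proof takes a different and essentially unavoidable route at this point: after the cleanup of Lemma~\ref{lem:3light} one may assume there is no $rbrb$ and no $rgrg$ subsequence, and the no-meet hypothesis makes the sequence $r$-alternate; if there is at most one blue or at most one green pass, the longest possible sequences are $rgrbrgr$ and $rbrgrbr$, hence $7$-light already. Otherwise any sequence of more than $7$ passes contains $grbrgrb$ or $brgrbrg$, and the paper gives a \emph{new, genuinely tri-colored} reduction (Figure~\ref{fig:bluegreen}) for these patterns, using the non-crossing structure of all three trees simultaneously to decide which cut components can be reconnected through the portal. That tri-color reduction is the missing ingredient in your argument; without it (or something equivalent), reducing color pairs one at a time cannot be carried out in the presence of the interleaved third color.
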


\begin{proof}
Apply the portal-crossing reduction from Lemma~\ref{lem:3light} and
consider a portal.
Recall that, after this operation, there are no
$rbrb$ and $rgrg$ subsequences in the passes of the portal.
Here, $r$, $b$, and $g$ correspond to the passes of the trees $R^*$,
$B^*$, and $G^*$, respectively.
If the portal has only one blue or one green pass,
then the solution is already 7-light at the portal (with the longest
possible sequences $rgrbrgr$ and $rbrgrbr$, respectively).
Otherwise, it contains at least two blue and at least two green passes. Notice that the sequence of
passes must be $r$-alternate, that is, of the form $\dots r\circ r\circ r\dots$
since blue and green do not meet.
Thus, a sequence of more than 7 passes must contain a subsequence
$grbrgrb$ (or a symmetric one, $brgrbrg$). These subsequences are reducible. See
Figure~\ref{fig:bluegreen}
for one of the possible cases, the other cases are analogous.
\end{proof}

Now, we show that $T^*$ can be transformed into a 10-light
portal-respecting solution~$T'$ of length at most
$|R^*|+|G^*|+3|B^*|$.

\begin{lemma}
  \label{lm:3reduction}
  A portal-respecting solution $T^*$ to \cest{3}, consisting of trees
  $B^*$, $R^*$, and $G^*$,
  can be transformed into a portal-respecting solution $T'$ such that
  \begin{enumerate}[(i)]
    \item $T'$ passes at most $10$ times through each portal, and
    \item $|T'| \le |R^*|+|G^*|+3|B^*|$. 
  \end{enumerate}
\end{lemma}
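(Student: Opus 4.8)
The plan is to first deal with all portals where blue and green meet, by "absorbing" the offending green passes into a thickened copy of the blue tree $B^*$, and then invoke Lemma~\ref{lm:7passes} on the resulting solution, where blue and green no longer meet. Concretely, I would proceed as follows. Run the portal-crossing reduction of Lemma~\ref{lem:3light} on the pair $(R^*,B^*)$ and on the pair $(R^*,G^*)$ so that no portal contains an $rbrb$ or $rgrg$ subsequence; this does not increase the length. Now consider a portal where some blue pass is adjacent to some green pass. The key geometric idea is the same one underlying Lemma~\ref{lm:shell}: a "shell" (a closed curve of length at most $2|B^*|$) that hugs $B^*$ and separates it from everything else. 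I would route a detour for the green tree along this shell: wherever $G^*$ would cross a portal immediately next to a blue pass, reroute that part of $G^*$ to travel along the shell around $B^*$ instead, so that, after the rerouting, green passes are no longer adjacent to blue passes at any portal. The extra length charged to green is at most the length of the shell(s) used, i.e.\ at most $2|B^*|$ in total (the shell has length $2|B^*|$ and each "side trip" along it can be bounded by charging disjoint arcs of the shell, exactly as in the proof of Lemma~\ref{lm:shell}).

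After this rerouting, we have a portal-respecting solution $T''$ with $|R''| = |R^*|$, $|G''| \le |G^*| + 2|B^*|$, and $|B''| = |B^*|$, and in $T''$ blue and green do not meet at any portal. Apply Lemma~\ref{lm:7passes} to $T''$: it becomes a $7$-light portal-respecting solution $T'$ of no greater length. Hence $|T'| \le |R^*| + (|G^*| + 2|B^*|) + |B^*| = |R^*| + |G^*| + 3|B^*|$, giving~(ii). For~(i), note that the rerouted green edges now run alongside the blue tree, so at a portal we may see, in addition to the $\le 7$ passes guaranteed by Lemma~\ref{lm:7passes} for the "core" of the solution, a few extra passes coming from the green shell hugging $B^*$; bounding these extra passes by $3$ (one blue pass can be shadowed by at most a constant number of green shell passes, which after a local snap-and-cut reduces to a small constant) yields the claimed bound of $10$. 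The precise constant $10 = 7 + 3$ is what the non-crossing shell construction buys us.

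The main obstacle is making the "reroute green along a shell around blue" step genuinely portal-respecting and controlling how many new passes it creates at each portal. The shell around $B^*$ is not itself portal-respecting, so after constructing it one must snap the new green detours to portals (as in Lemma~\ref{lem:portalred}) — but here we cannot afford a $(1+\eps)$ blow-up argument a second time, so instead I would argue directly: the green detour replaces a portal crossing that was already there, it only ever runs immediately next to a blue crossing, and so it adds at most one new green pass per pre-existing blue pass at that portal; combined with the bound of at most $O(1)$ blue passes per portal after Lemma~\ref{lem:3light}, and a final local reduction that snaps parallel same-color passes together and breaks the resulting cycles (again without increasing length, since portals have zero width), this keeps the total number of passes at each portal bounded by $10$. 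Verifying that this local cleanup is compatible with the non-crossing requirement — i.e.\ that snapping a green detour onto a blue pass and then breaking cycles never forces a crossing — is the delicate part, and I would handle it by the same case analysis on short alternating subsequences of passes used in Lemmas~\ref{lem:3light} and~\ref{lm:7passes}.
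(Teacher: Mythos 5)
Your plan hinges on the claim that rerouting green along a shell around $B^*$ makes blue and green no longer meet at any portal, so that Lemma~\ref{lm:7passes} becomes applicable. This is backwards: a shell around $B^*$ hugs the blue tree, so it crosses every portal that $B^*$ crosses, immediately adjacent to the blue passes. After your rerouting, green passes are therefore adjacent to blue passes at every portal used by $B^*$; the hypothesis of Lemma~\ref{lm:7passes} is destroyed rather than established. Applying that lemma only to a ``core'' that excludes the shell passes is not licensed either, since its local reroutings at a portal must not cross the interleaved green shell passes, and that interaction is exactly the three-color difficulty the lemma does not handle. Moreover, your arithmetic $10=7+3$ presupposes a constant bound on the number of blue passes per portal before the shell is added, but the pairwise reductions on $(R^*,B^*)$ and $(R^*,G^*)$ only forbid $rbrb$ and $rgrg$ subsequences; an alternating sequence $bgbgbg\dots$ survives them, so a portal may still carry arbitrarily many blue passes, and a shell hugging $B^*$ then contributes roughly two green passes per blue pass, not three in total.

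The paper's proof inverts your order of operations precisely to avoid these problems. At portals where a blue and a green pass are adjacent it merges green into blue \emph{for free}, treating them as one blue-green tree; this yields a $\bg$-solution (green components attached to the blue tree). Splitting each tri-colored portal at its leftmost blue pass, the left part is bi-colored (at most $3$ passes after the initial reduction) and the right part is reduced via Lemma~\ref{lem:3light} applied to the merged blue-green tree versus red, giving at most $6$ passes per portal and, crucially, at most $2$ blue passes per portal. Only then is the budget $2|B^*|$ spent on a green shell around $B$ --- not to separate green from blue, but to restore the connectivity of the green components --- and with at most $2$ blue passes per portal this adds at most $4$ passes, giving $10$ passes and length $|R^*|+|G^*|+3|B^*|$. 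In your scheme the shell is spent on a separation it cannot provide, and neither the constant bound on blue passes per portal nor (had you merged green into blue) the reconnection of green is ever secured, so both conclusions (i) and (ii) remain unproven.
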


\begin{proof}
  We define a \emph{$\bg$-solution}; informally, this is a solution in
    which we are allowed to connect green branches to the blue tree.
    Formally, a $\bg$-solution is a set of three non-intersecting curves
  spanning the terminals of the corresponding color such that the blue and the
  red curves are connected (that is, they are trees) and every
  green component is attached to the blue tree. Note that, in general, a $\bg$-solution is
  not a valid \cest{3} solution.
  We prove the lemma in two steps. First, we show that $T$ can be
  transformed to a portal-respecting $\bg$-solution $T^{\bg}$ with at most $6$ passes per
  portal having the same (or smaller) length. Then, we show how $T^{\bg}$ can be further
  modified to get a portal-respecting solution $T^*$ with at most $10$ passes per
  portal and the desired length.

  In order to construct $T^{\bg}$ from $T^*$,
  we first replace, as in Lemma~\ref{lem:3light},
  all uni-colored sequences
  of passes (that is, consisting of passes of the same color) in a portal by a single pass, and all bi-colored
  sequences of passes by at most 3 passes.
  We call this procedure the \emph{initial reduction}.
  Consequently, uni-colored and bi-colored
  portals have at most 3 passes; hence, we focus on the portals containing passes of all
  three colors.

  \begin{figure}[tb]
    \centering
    \begin{subfigure}[b]{.3\textwidth}
      \centering
      \includegraphics[page=1]{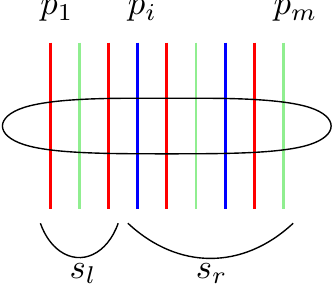}
      \caption{a tri-colored portal after the initial reduction}
      \label{fig:portal53a}
    \end{subfigure}
    \hfil
    \begin{subfigure}[b]{.3\textwidth}
      \centering
      \includegraphics[page=2]{portal53}
      \caption{eliminating the green passes that meet a blue pass}
      \label{fig:portal53elim}
    \end{subfigure}
    \hfil
    \begin{subfigure}[b]{.3\textwidth}
      \centering
      \includegraphics[page=3]{portal53}
      \caption{applying the portal reduction to $s_l$ and $s_r$}
      \label{fig:portal53b}
    \end{subfigure}
    \caption{Construction of a $\bg$-solution to a portal.}
    \label{fig:portal53}
  \end{figure}

  We can assume that there is a portal in which a blue and a green pass
  are adjacent (otherwise, we already have a 7-light instance by
  Lemma~\ref{lm:7passes}.).  We eliminate the
  green pass by connecting it to its neighboring blue pass; see Figure~\ref{fig:portal53elim}. We thus may consider
  the blue and the green tree to be connected and together to form a blue-green tree.
  Its length is $|B^*|+|G^*|$.
  After this step, the sequence of passes is R-alternate.

  Consider now a tri-colored portal in $T$ after the initial reduction.
  Let $(p_1, \dots, p_m)$ be a sequence of passes through the portal and
  suppose that pass $p_i$ ($1\le i\le m$) is the leftmost blue pass; see
  Figure~\ref{fig:portal53a}. Split the sequence into two subsequences
  $s_l = (p_1,\dots, p_{i-1})$ and $s_r = (p_i,\dots, p_m)$. Since~$p_i$ is the
  leftmost blue pass, $s_l$ is bi-colored and, by the initial reduction,
  $|s_l| \le 3$. Regarding~$s_r$, we apply the portal-crossing reduction according to
  Lemma~\ref{lem:3light}, by viewing the blue and the green passes as passes of a
  single blue-green tree; see Figure~\ref{fig:portal53b}. As a result, we get a new
  instance~$s_r'$
  with $|s_r'| \le 3$. Recall that, before this step, the blue and the
  green tree are connected; thus, every blue and green pass is connected to the
  left-most pass~$p_i$. Further, after disconnecting a blue pass in the proof of
  Lemma~\ref{lem:3light}, one part remains connected to~$p_i$ by a blue path,
  and the other part gets connected to~$p_i$ by a blue segment. For the green
  passes, one part remains connected to~$p_i$, while the other part gets connected
  to~$p_i$ by a green segment.
  Since the blue segments where connected before this step, the blue tree
  remains connected after the portal-crossing reduction. The green tree is split into
  subtrees that are connected to the blue tree. Therefore, the new instance is a
  $\bg$-solution. The sequence of passes in the portal for $T^{\bg}$ is a
  concatenation of $s_l$ and $s_r'$ and, hence, has at most 6 passes per portal.

  Note that the constructed solution $T^{\bg}$ has at most 2 blue passes per portal. We add
  a green shell to~$B$ to connect green branches. This increases the number of passes
  per portal by at most~4. The resulting solution $T^*$ has
  length bounded by $|R^*|+|G^*|+3|B^*|$ and at most 10 passes per portal.
\end{proof}

\begin{figure}[tb]
  \centering
  \includegraphics[page=1]{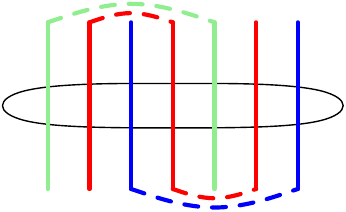}
  \hfil
  \includegraphics[page=2]{bluegreen}
  \caption{Constructing a 7-light solution to an instance without 
    adjacent blue-green passes (one of several possible cases).}
  \label{fig:bluegreen}
\end{figure}

Before we describe our approximation algorithm, we first need to discuss the perturbation step.
The perturbation itself is the same as in Section~\ref{sec:2cest}:
we move each terminal to a uniquely defined closest grid point (we assign the
grid points of even row and odd column to the third color) and merge
terminals of the same color to one terminal.
However, we need a different technique to transform a solution to the 
original instance into a solution to the perturbed instance and vice versa.

\begin{lemma}\label{lem:3cestPerturbed}
  Let $I$ be a \cest{3} instance with $n$ terminals, let $\opt_I$ be
  the length of an optimal solution to~$I$, and let $\eps>0$. Then,
  we can place an $(L\times L)$-grid with $L = O(n/\eps)$ such
  that, for the perturbed instance~$I^*$ of~$I$, $\opt_{I^*} \le
  (1+\eps)\opt_I$.
\end{lemma}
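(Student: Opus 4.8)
The plan is to adapt the perturbation argument of Lemma~\ref{lem:perturbed} to three colors, the only genuinely new ingredient being that a connecting segment may now be forced to cross \emph{two} other trees rather than one. First I would fix an optimal solution $\mathcal T$ to $I$ consisting of trees for the red, green, and blue terminals, and choose $L$ to be a power of $2$ in an interval of the form $[c n/\eps, 2cn/\eps]$ for a suitable constant $c$ (e.g.\ $c = 6\sqrt 2$, leaving a little slack for the extra crossing), then snap each terminal to its uniquely defined closest grid point as described, merging coincident same-color terminals. For each terminal $t$ of $I^*$, I connect $t$ by a straight segment $\sigma_t$ to the nearest point of $\mathcal T$ lying on the tree of $t$'s color; the length of $\sigma_t$ is at most the distance from $t$ to its original position in $I$, hence at most $\sqrt 2$.

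The key step is repairing the crossings that $\sigma_t$ may create. In the two-color case $\sigma_t$ could cross only the single other tree, and one reroutes it with two copies of $\sigma_t$; here $\sigma_t$ may cross \emph{both} of the other two trees. The fix is essentially the same: for each of the (at most two) other trees that $\sigma_t$ meets, reroute that tree locally around $\sigma_t$ using two parallel copies of $\sigma_t$ placed on either side of it, collecting all of that tree's crossing edges onto the copies; since the two other trees are themselves non-crossing, their two reroutings can be nested so as not to interfere with one another. Thus the total extra length charged to $\sigma_t$ is at most $|\sigma_t|$ for the segment itself plus $2|\sigma_t|$ for each of the two reroutings, i.e.\ at most $5|\sigma_t| \le 5\sqrt2$ (this is why the constant $c$ is taken slightly larger than in Lemma~\ref{lem:perturbed}). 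Summing over all $n$ terminals, the solution produced for $I^*$ has length at most $\opt_I + 5\sqrt 2\, n$.

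Finally I would invoke the standing assumption that the bounding rectangles of (some pair of) the input color classes overlap, which forces $\opt_I \ge L \ge c n/\eps$, so $5\sqrt2\,n \le (5\sqrt2/c)\,\eps\,\opt_I \le \eps\,\opt_I$ for the chosen $c$. Hence $\opt_{I^*} \le \opt_I + \eps\,\opt_I = (1+\eps)\opt_I$, and $L = O(n/\eps)$ as required. The main obstacle is purely bookkeeping: verifying that the simultaneous reroutings around $\sigma_t$ for the two foreign trees can be carried out without the two reroutings crossing each other or any third $\sigma_{t'}$, and that collecting multiple crossing edges onto the two parallel copies still costs only $2|\sigma_t|$ per tree (exactly as argued for two copies in Lemma~\ref{lem:perturbed}); once that is in place the length accounting is immediate and the constant in $L$ absorbs the extra factor.
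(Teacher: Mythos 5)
Your overall frame (snap to the grid, connect each perturbed terminal by a short segment to its own tree, repair crossings, charge $O(n)$ against $\opt_I \ge L$) matches the paper, but the central repair step has a genuine gap. You claim that, because the two foreign trees are non-crossing, you can reroute each of them around $\sigma_t$ with two parallel copies and simply \emph{nest} the two reroutings. Non-crossing-ness of the red and blue trees does not prevent their intersection points with $\sigma_t$ from \emph{interleaving} along the segment: red and blue can wind alternately around the terminal (concentric arcs or a spiral, exactly the phenomenon of Figure~\ref{fig:spiral}), so the crossings along $\sigma_t$ can read red, blue, red, blue, $\dots$ with arbitrarily many alternations. In that situation the nested two-copies-per-tree construction fails: if, say, the red shell is the inner one, a blue edge whose crossing point lies between two red crossing points must reach the outer blue shell, and either it or the shell segments collecting it must cross the red shell (and symmetrically if blue is inner). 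So the step you dismiss as ``purely bookkeeping'' is precisely the difficulty that makes the three-color perturbation nontrivial, and your $5|\sigma_t|$ cost bound rests on a construction that does not exist in general.

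The paper resolves this differently: it processes the crossing segments in order of non-increasing distance from the terminal, builds a \emph{three-layer} bi-colored shell around the terminal from the first three of them, and then, for each further segment, treats its crossing with the shell as a portal with four bi-colored passes, applies the portal-crossing reduction of Lemma~\ref{lem:3light} to bring it back to at most three passes, and stretches that portal around the terminal. This keeps the shell at three layers throughout, so the cost per terminal is $\sqrt2 + 6\sqrt2 = 7\sqrt2$ (hence $L$ chosen in $[7\sqrt2 n/\eps, 14\sqrt2 n/\eps]$), rather than your $5\sqrt2$. To fix your write-up you would need either to prove that interleaving cannot occur (it can), or to replace the nesting claim by an argument of this shell-plus-portal-reduction type.
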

\begin{proof}
  We proceed similar to the proof of Lemma~\ref{lem:perturbed} by connecting each
  terminal of~$I^*$ to the nearest point of its corresponding tree. Since this
  connection can cross segments of two colors, we have to be more careful with
  the rerouting. We choose~$L$ as a power of~$2$ within the 
  interval~$[7\sqrt{2}n/\eps,14\sqrt{2}n/\eps]$.
  
  Fix an optimal solution to~$I$. Consider a terminal~$v$ of~$I^*$ of,
  say, green color. Connect it to the nearest point of the green tree by a
  straight-line segment~$s$. Note that the length~$|s|$ of this segment is
  bounded by the distance of~$v$ to its corresponding terminal in~$I$, which
  is at most~$\sqrt{2}$ as we are assuming a unit grid.

  Assume that~$s$ intersects~$\sigma$ red or blue segments.  We order
  the segments by non-increasing
  distance between~$v$ and their intersection point with~$s$; see Figure~\ref{fig:3cestperturbed} for an example.
  We reroute the first three segments along~$s$ going around~$v$.
  This yields a 3-layer shell around~$v$.  Consider the next
  segment according to the ordering and reroute it along~$s$ and around~$v$.
  We can view the crossing point of this segment with the shell as a portal
  on one side of~$s$. This portal contains four bi-colored passes. Using
  Lemma~\ref{lem:3light}, we reduce the number of passes to at most three.
  Now, we stretch this portal around~$v$ along~$s$ until it reaches the crossing
  point on the other side of~$s$. Since the portal has at most three passes, the
  shell around~$v$ still consists of three layers. We repeat this until all
  segments are rerouted around~$v$.

  By using this rerouting for every terminal in~$I^*$, the total
  length of the solution increases by at most~$(\sqrt{2}+6\cdot \sqrt{2})n=7\sqrt{2}n$.
  Since $\opt_I\ge L$, the
  length of an optimal solution to~$I^*$ is at most
  \[\opt_I+7\sqrt{2}n\le (1+\eps)\opt_I~.\qedhere\]
\end{proof}

\begin{figure}
  \centering
  \begin{subfigure}[b]{.47\textwidth}
    \centering
    \includegraphics[page=1]{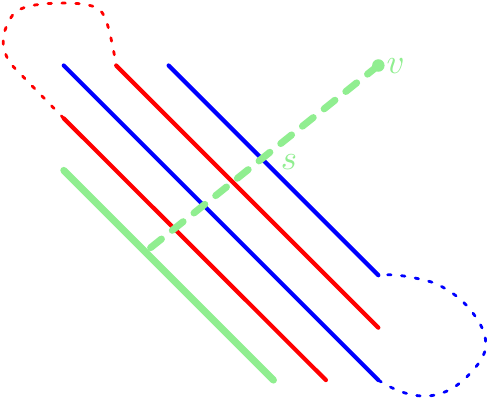}
    \caption{$s$ intersects four red/blue segments}
  \end{subfigure}
  \hfil
  \begin{subfigure}[b]{.47\textwidth}
    \centering
    \includegraphics[page=2]{3cestperturbed}
    \caption{adding the shell for the first three segments}
  \end{subfigure}
  
  \medskip
  
  \begin{subfigure}[b]{.47\textwidth}
    \centering
    \includegraphics[page=4]{3cestperturbed}
    \caption{adding the shell and portal for the fourth segment}
  \end{subfigure}
  \hfil
  \begin{subfigure}[b]{.47\textwidth}
    \centering
    \includegraphics[page=7]{3cestperturbed}
    \caption{resolving the portal and extending it to the other side}
  \end{subfigure}
  
  \medskip
  
  \begin{subfigure}[b]{.47\textwidth}
    \centering
    \includegraphics[page=8]{3cestperturbed}
    \caption{resulting solution}
  \end{subfigure}
  \caption{An example for the rerouting in the proof of Lemma~\ref{lem:3cestPerturbed}.}
  \label{fig:3cestperturbed}
\end{figure}

Analogously to the proof of Lemma~\ref{lem:3cestPerturbed},
we transform a solution to a perturbed instance back into one to the original
instance not increasing the length by much. Then, we combine the lemmas to
prove the main result of this section.

\begin{lemma}\label{lem:3cestPerturbedBack}
    We can transform a solution~$\mathcal T$ to the perturbed instance~$I^*$
    into a solution to the original
    instance~$I$, increasing the length by at most $\eps\opt_I$.
\end{lemma}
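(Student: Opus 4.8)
The plan is to mirror the structure of the proof of Lemma~\ref{lem:3cestPerturbed}, but run the rerouting in the other direction: starting from a solution~$\mathcal{T}$ to the perturbed instance~$I^*$, we must move each perturbed terminal back to its original position in~$I$ while keeping the three trees non-crossing and controlling the length increase. Concretely, for each original terminal~$v$ of~$I$ (say of green color), let~$v^*$ be the perturbed grid point it was snapped to. The point~$v^*$ lies on the green tree of~$\mathcal{T}$. We connect~$v$ to~$v^*$ by a straight-line segment~$s$ of length at most~$\sqrt{2}$ (the unit-grid bound), extending the green tree along~$s$ so that it now reaches~$v$.

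First I would handle the complication that a single original terminal may have absorbed several perturbed terminals: recall that during perturbation we merged all terminals of the same color that landed on one grid point. So one grid point~$v^*$ may correspond to a cluster of original green terminals $v_1, \dots, v_t$, all within distance~$\sqrt{2}$ of~$v^*$. For each of them we add a segment from~$v^*$; the total added green length over the whole cluster is at most~$t\sqrt{2}$, and summed over all clusters this is at most~$\sqrt{2}\,n$. Second, and this is where the rerouting machinery of Lemma~\ref{lem:3cestPerturbed} is needed: the segment~$s$ may cross up to~$\sigma$ red or blue segments of~$\mathcal{T}$, and we must absorb these crossings by laying a shell of the two foreign colors around the green path~$s$ (equivalently around~$v$). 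Exactly as in the forward direction, we order the crossed foreign segments by non-increasing distance of their crossing point from~$v$, reroute the first three to form a 3-layer shell, and then for each further segment we treat its crossing with the existing shell as a portal with four bi-colored passes, invoke Lemma~\ref{lem:3light} to reduce it to at most three passes, and stretch that portal around~$v$ along~$s$, so the shell stays at three layers. Each such shell contributes at most~$6\sqrt{2}$ of foreign length per original terminal (three layers, each traversed essentially twice, over a path of length~$\le\sqrt{2}$), hence at most~$6\sqrt{2}\,n$ over all terminals.

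Adding the two contributions gives a total length increase of at most $\sqrt{2}\,n + 6\sqrt{2}\,n = 7\sqrt{2}\,n$. Since the bounding rectangles of the input terminal sets overlap, $\opt_I \ge L$, and with~$L$ chosen (as in Lemma~\ref{lem:3cestPerturbed}) to be a power of~$2$ in $[7\sqrt{2}n/\eps, 14\sqrt{2}n/\eps]$ we get $7\sqrt{2}\,n \le \eps L \le \eps\,\opt_I$, so the transformed solution has length at most $|\mathcal{T}| + \eps\,\opt_I$, as claimed. One should also note that after all reroutings the three curves are still pairwise non-crossing: the green extensions live on the segments~$s$, and all red/blue material that used to cross an~$s$ has been pushed into the shell around~$v$, which by construction does not cross the green path; shells around distinct original terminals are separated because the segments~$s$ for different terminals (even within the same cluster) can be taken to be internally disjoint except at their shared endpoint~$v^*$, which carries no passes.

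The main obstacle I expect is precisely this non-crossing bookkeeping when many foreign segments cross a single~$s$ and when clusters of merged terminals sit at one grid point: one has to argue carefully that the iterative ``portal-stretching'' step of Lemma~\ref{lem:3cestPerturbed} can be applied repeatedly without the layer count growing and without the shell of one terminal colliding with the green path or shell of a neighboring terminal. This is routine given the forward-direction argument but is the only place where the proof is not a verbatim reversal; everything else is symmetric to Lemma~\ref{lem:3cestPerturbed}. I would therefore phrase the proof as ``proceed exactly as in the proof of Lemma~\ref{lem:3cestPerturbed}, reversing the roles of~$I$ and~$I^*$,'' and only spell out the cluster-merging contribution and the non-crossing check in detail.
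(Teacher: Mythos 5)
Your proposal is correct and follows essentially the same route as the paper, whose proof consists of exactly this observation: connect each original terminal to~$\mathcal T$ as in the proof of Lemma~\ref{lem:3cestPerturbed}, paying a constant (at most $7\sqrt{2}$, bounded in the paper by $14$) per terminal, which sums to at most $\eps\opt_I$ by the choice of~$L$. The cluster-merging and non-crossing bookkeeping you spell out are exactly the details the paper leaves implicit under ``analogously.''
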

\begin{proof}
  Iteratively connect each terminal of the original instance to the
  solution $\mathcal T$ analogously to the proof of
  Lemma~\ref{lem:3cestPerturbed}.  Again, we pay at most 14 units per
  terminal, which yields the claim.
\end{proof}

Using Lemmas~\ref{lm:3reduction},~\ref{lem:3cestPerturbed}
and~\ref{lem:3cestPerturbedBack}, we are now ready to prove
the main result of this section.

\begin{theorem}
  \label{thm:3cest}
  For every $\eps>0$, \cest{3} admits a $(5/3+\eps)$-approximation
  algorithm.
\end{theorem}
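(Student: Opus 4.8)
The plan is to chain Lemma~\ref{lm:3reduction} with the two perturbation lemmas and a dynamic program in the spirit of Lemma~\ref{lem:dp-ptas}. Fix an instance~$I$ of \cest{3} and a target~$\eps>0$, and pick a sufficiently small $\delta>0$ (of order~$\eps$), to be fixed at the very end. First I would move to the perturbed instance~$I^*$ on an $(L\times L)$-grid with $L=O(n/\delta)$, as given by Lemma~\ref{lem:3cestPerturbed}, so that $\opt_{I^*}\le(1+\delta)\opt_I$. All the algorithmic work is then carried out on~$I^*$, and the computed solution is transported back to~$I$ at the end using Lemma~\ref{lem:3cestPerturbedBack}, at an additive cost of at most $\delta\opt_I$.

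Next I would establish that a cheap $10$-light portal-respecting solution to~$I^*$ exists. Place the randomly shifted quadtree and the portals as in Section~\ref{sec:2cest}, fix an optimal solution~$T$ to~$I^*$ with trees~$R$, $G$, $B$, and relabel the colors so that $B$ is the cheapest of the three, hence $|B|\le\opt_{I^*}/3$. Making $T$ portal-respecting by moving each grid crossing to the nearest portal, exactly as in the proof of Lemma~\ref{lem:portalred}, increases the length of each individual tree by an expected factor of at most~$1+\delta$, giving a portal-respecting solution $T^*$ with trees $R^*,G^*,B^*$ and $\mathbb{E}[|R^*|]\le(1+\delta)|R|$, and likewise for $G^*$ and $B^*$. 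Applying Lemma~\ref{lm:3reduction} to~$T^*$ with this labelling produces a $10$-light portal-respecting solution~$T'$ with $|T'|\le|R^*|+|G^*|+3|B^*|$. Taking expectations over the quadtree shift,
\[
\mathbb{E}[|T'|]\le(1+\delta)\bigl(|R|+|G|+3|B|\bigr)=(1+\delta)\bigl(\opt_{I^*}+2|B|\bigr)\le(1+\delta)\tfrac{5}{3}\opt_{I^*},
\]
so for at least one quadtree position there is a $10$-light portal-respecting solution of length at most $(1+\delta)\tfrac53\opt_{I^*}$.

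It then remains to compute such a solution. I would use the dynamic program of Lemma~\ref{lem:dp-ptas}, extended from two colors and three passes per portal to three colors and up to ten passes per portal: a subproblem is again a quadtree square together with a bounded-length colored sequence on each of its portals and a non-crossing partition of these points into monochromatic groups, and the base case (a unit square with terminals only at corners) is handled by the EST~PTAS~\cite{Aro98} inside pairwise-disjoint convex hulls, losing a factor of only~$1+\delta$. The counting is unchanged up to constants: $O(n^2)$ squares, $O(\log n/\delta)$ portals each, a constant number of admissible colored sequences per portal and a Catalan number of non-crossing partitions, hence $n^{O(1/\delta)}$ subproblems and running time $n^{O(1/\delta)}$. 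Running this dynamic program for every one of the $O(n^2)$ quadtree positions and keeping the shortest output yields a $10$-light portal-respecting solution to~$I^*$ of length at most $(1+\delta)^2\tfrac53\opt_{I^*}\le(1+\delta)^3\tfrac53\opt_I$. Transporting it back to~$I$ via Lemma~\ref{lem:3cestPerturbedBack} adds at most $\delta\opt_I$, for a total of at most $\bigl((1+\delta)^3\tfrac53+\delta\bigr)\opt_I\le(\tfrac53+\eps)\opt_I$ once $\delta$ is small enough in terms of~$\eps$.

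I expect the only delicate point to be the bookkeeping around the two independent averaging arguments — the grid perturbation and the quadtree shift — combined with the asymmetric accounting in Lemma~\ref{lm:3reduction}. The clean way to handle this is to apply Lemma~\ref{lm:3reduction} with the color labels taken from the deterministic optimal solution~$T$ (so that the estimate $|B|\le\opt_{I^*}/3$ is legitimate), while the dynamic program that actually produces the output is colour-symmetric and simply returns the shortest $10$-light portal-respecting forest, which is therefore no longer than the witness~$T'$; no guessing of the cheapest color is needed. Everything else is a routine chaining of the already-established lemmas.
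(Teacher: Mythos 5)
Your proposal is correct and follows essentially the same route as the paper: perturb via Lemma~\ref{lem:3cestPerturbed}, build a portal-respecting witness, invoke Lemma~\ref{lm:3reduction} with the cheapest tree as~$B$ to get a $10$-light solution of length at most roughly $\tfrac{5}{3}$ times optimal, compute it with the (three-color, $10$-light) extension of the dynamic program, and undo the perturbation via Lemma~\ref{lem:3cestPerturbedBack}. Your per-tree expectation bookkeeping (bounding $|B|\le\opt_{I^*}/3$ before the portal-respecting step and letting the color-symmetric DP find the witness) is, if anything, slightly more careful than the paper's own accounting.
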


\begin{proof}
  Let~$\eps'=\sqrt[3]{1+3\eps/5}-1$. Let~$T$ be an optimal solution to
  a \cest{3} instance~$I$ with trees $R$, $G$ and~$B$. W.l.o.g., assume
  that~$|B|\le |R|,|G|$. Denote by~$\opt_I=|R|+|G|+|B|$ the length of~$T$.
  We first construct a portal-respecting solution~$T^*$
  of length
  \[|T^*|=|R^*|+|G^*|+|B^*|\le (1+\eps')(|R|+|G|+|B|)~.\]
  Then, Lemma~\ref{lm:3reduction} yields
  an optimal 10-light portal-respecting solution~$T'$ of length
\begin{flalign*}
  |T'| \le |R^*|+|G^*|+3|B^*| \le 5/3 \cdot |T^*| & \le 5/3 \cdot (1+\eps') \cdot (|R|+|G|+|B|) \\
  & = 5/3 \cdot (1+\eps') \cdot \opt_I~.
\end{flalign*}

Using a dynamic program similar to the one described in
Section~\ref{sec:2cest} and Lemma~\ref{lem:3cestPerturbed}, we find a
10-light portal-respecting solution
of length $(1+\eps')|T'|$ to the perturbed instance~$I^*$ of~$I$.
By Lemma~\ref{lem:3cestPerturbedBack},
we can transform our solution to~$I^*$ into a solution to~$I$
whose total length is bounded by
\[(1+\eps')^2|T'|
\le 5/3(1+\eps')^3\opt_I
< (5/3+\eps)\opt_I~. \qedhere\]
\end{proof}

\section{Conclusion}

We have presented approximation algorithms for \kcest.  
We leave the following questions open.
For $k=2$, we achieved a PTAS, for $k=3$, a ratio of $5/3+\eps$, and
for general~$k$, ratios $k + \eps$ and $O(\sqrt n \log k)$.

Apart from improving approximation algorithms for \kcest with $k\ge 3$,
a number of interesting questions remain open.
Is \kcest APX-hard for some $k \ge 3$? 
Can we improve the running time of the PTAS for
\cest{2} from $O(n^{O(1/\eps)})$ to $O(n (\log n)^{O(1/\eps)})$ as
Arora~\cite{Aro98} did for EST?

Currently, we are studying an ``anchored'' version of \kcest where the
only allowed Steiner points are input points of a different color.
Any $\alpha$-approximation for \kcest yields an $\alpha(1 +
\sqrt{3})/2$- approximation for the anchored version.

\paragraph*{Acknowledgments.}
We are grateful to Alon Efrat, Jackson Toeniskoetter, and Thomas van Dijk for
the initial discussion of the problem.

\bibliographystyle{alpha}
\bibliography{abbrv,paper}

\end{document}